\begin{document}

\def \d {{\rm d}}

\def \bm #1 {\mbox{\boldmath{$m_{(#1)}$}}}

\def \bF {\mbox{\boldmath{$F$}}}
\def \bV {\mbox{\boldmath{$V$}}}
\def \bff {\mbox{\boldmath{$f$}}}
\def \bT {\mbox{\boldmath{$T$}}}
\def \bk {\mbox{\boldmath{$k$}}}
\def \bl {\mbox{\boldmath{$\ell$}}}
\def \bn {\mbox{\boldmath{$n$}}}
\def \bbm {\mbox{\boldmath{$m$}}}
\def \tbbm {\mbox{\boldmath{$\bar m$}}}

\def \0 {^{(0)}}
\def \1 {^{(1)}}

\def \a {\alpha}
\def \B {\beta}

\def \T {\bigtriangleup}
\newcommand{\msub}[2]{m^{(#1)}_{#2}}
\newcommand{\msup}[2]{m_{(#1)}^{#2}}

\newcommand{\be}{\begin{equation}}
\newcommand{\ee}{\end{equation}}

\newcommand{\beqn}{\begin{eqnarray}}
\newcommand{\eeqn}{\end{eqnarray}}
\newcommand{\AdS}{anti--de~Sitter }
\newcommand{\AAdS}{\mbox{(anti--)}de~Sitter }
\newcommand{\AAN}{\mbox{(anti--)}Nariai }
\newcommand{\AS}{Aichelburg-Sexl }
\newcommand{\pa}{\partial}
\newcommand{\pp}{{\it pp\,}-}
\newcommand{\ba}{\begin{array}}
\newcommand{\ea}{\end{array}}

\newcommand{\M}[3] {{\stackrel{#1}{M}}_{{#2}{#3}}}
\newcommand{\m}[3] {{\stackrel{\hspace{.3cm}#1}{m}}_{\!{#2}{#3}}\,}

\newcommand{\tr}{\textcolor{red}}
\newcommand{\tb}{\textcolor{blue}}
\newcommand{\tg}{\textcolor{green}}
\newcommand{\tor}{\textcolor{orange}}

\def\a{\alpha}
\def\g{\gamma}
\def\de{\delta}

\def\b{{\kappa_0}}

\def\R{{\cal R}}
\def\F{{\cal F}}
\def\L{{\cal L}}

\def\e{e}
\def\bb{b}

%Marcello:
%\newtheorem{theorem}{Theorem}
%\newtheorem{theorem}{Theorem}[section]
%\newtheorem{remark}{Remark}[section]
%\newtheorem{definition}{Definition}[section]
%\newtheorem{lemma}{Lemma}[section]
%\newtheorem{proposition}{Proposition}[section]

%added by Vojta
\newtheorem{theorem}{Theorem}[section] 
\newtheorem{cor}[theorem]{Corollary} 
\newtheorem{lemma}[theorem]{Lemma} 
\newtheorem{proposition}[theorem]{Proposition}
\newtheorem{definition}[theorem]{Definition}
\newtheorem{remark}[theorem]{Remark}

\title{Electromagnetic fields with vanishing scalar invariants}

\author{Marcello Ortaggio\thanks{ortaggio@math.cas.cz} \ and Vojt\v ech Pravda\thanks{pravda@math.cas.cz} \\
Institute of Mathematics of the Czech Academy of Sciences
\\ \v Zitn\' a 25, 115 67 Prague 1, Czech Republic}

\maketitle

\abstract{We determine the class of $p$-forms $\bF$ which possess vanishing scalar invariants (VSI) at arbitrary order in a $n$-dimensional spacetime. Namely, we prove that $\bF$ is VSI {\em if and only if} it is of type N, its multiple null direction $\bl$ is ``degenerate Kundt'', and {$\pounds_{\bl}\bF=0$}. The result is theory-independent. Next, we discuss the special case of Maxwell fields, both at the level of test fields and of the full Einstein-Maxwell equations. These describe electromagnetic non-expanding waves propagating in various Kundt spacetimes. We further point out that a subset of these solutions possesses a {\em universal} property, i.e., they also solve (virtually) any generalized (non-linear and with higher derivatives) electrodynamics, possibly also coupled to Einstein's gravity.
}

\vspace{.2cm}
\noindent
PACS 04.50.+h, 04.20.Jb, 04.40.Nr

% 04.20.Jb  Exact solutions
% 04.50.+h  in more than four dimensions
% 04.40.Nr  Einstein-Maxwell spacetimes, spacetimes with fluids, radiation or classical fields

\tableofcontents

\section{Introduction}
\label{intro}

{\subsection{Background}}

Synge \cite{syngespec} called electromagnetic {\em null} fields those characterized by the vanishing of the two Lorentz invariants, i.e., 
\be
	F_{ab}F^{ab}=0 , \qquad F_{ab} {}^*F^{ab}=0 .
	\label{null}
\ee
From a physical viewpoint, null Maxwell fields characterize electromagnetic plane waves  \cite{syngespec} as well as the asymptotic behaviour of radiative systems (cf. \cite{penrosebook2} and references therein). Fields satisfying~\eqref{null} single out a unique null direction $\bl$ such that the corresponding energy-momentum tensor can be written as $T_{ab}=A\ell_a\ell_b$ \cite{Ruse36} (cf. also, e.g., \cite{syngespec,Stephanibook}). They also possess the unique property that their field strength at any spacetime point can be made as small (or large) as desired in a suitably boosted reference frame \cite{syngespec,Schroedinger35}. {Moreover, null solutions to the sourcefree Maxwell equations can be associated with shearfree congruences of null geodesics (and viceversa) via the Mariot-Robinson theorem \cite{Stephanibook,penrosebook2} and are therefore geometrically privileged.}

There are further reasons that motivate the interest in fields with the property \eqref{null}. On the gravity side, an analog of null electromagnetic fields is given by {metrics} of Riemann type III and N, for which all the zeroth-order scalar  invariants constructed from the Riemann tensor vanish identically (see \cite{Stephanibook} in 4D and \cite{Coleyetal04vsi} in higher dimensions).\footnote{Hereafter, by ``zeroth-order'' invariants of a certain tensor we refer to the {\em algebraic} ones, i.e., those not involving covariant derivatives of the given tensor. Additionally, we will restrict ourselves to {\em polynomial} scalar invariants (cf. Definition~5.1 of \cite{OrtPraPra13rev}).} These have remarkable properties. For example, all type~N Einstein spacetimes are automatically vacuum solutions of quadratic \cite{MalPra11prd} (in particular, Gauss-Bonnet \cite{PraPra08}) and Lovelock gravity \cite{ReaTanBen14}.
Furthermore, it has been known for some time \cite{Guven87,AmaKli89,HorSte90} that certain type N \pp wave solutions in general relativity (in vacuum or with dilaton and form fields) are classical solutions to string theory to all orders in $\sigma$-model perturbation theory -- in fact, they are also solutions to {\em any} gravity theory in which the ``corrections'' to the field equations can be expressed in terms of scalars and tensors  constructed from the field strenghts and their covariant derivatives (see \cite{Guven87,HorSte90,Horowitz90} for details), and are in this sense {\em universal} \cite{Coleyetal08}.\footnote{{To be precise, in the terminology of \cite{Coleyetal08} the universal property refers only to certain Einstein metrics, whereas here we clearly use the term ``universal'' in a broader sense (which applies not only to the metric -- not necessarily Einstein -- but to the full solution, also including possible matter fields).}}
Apart from being of Riemann type N and universal, the metrics considered in \cite{Guven87,AmaKli89,HorSte90} have the {special} property that all the scalar invariants constructed from the Riemann tensor {\em and} its covariant derivatives vanish and thus belong to the VSI class of spacetimes  \cite{Pravdaetal02,Coleyetal04vsi} (cf. Definition~\ref{def_VSI} below).\footnote{In the vacuum case, recent analysis \cite{HerPraPra14,Herviketal15} has extended the result of \cite{Guven87,AmaKli89,HorSte90} in various directions. In particular, it is now clear that the VSI property is neither a sufficient nor a necessary condition for ``universality'' (however, universal spacetimes must be CSI \cite{HerPraPra14}, i.e., with constant scalar invariants).  
Moreover, certain {non-\pp wave spacetimes of Weyl type III, N} \cite{HerPraPra14} and II (or D) \cite{Herviketal15} can also be universal.}

In view of these results for gravity, one may wonder whether {certain} null (or VSI) Maxwell fields possess a similar ``universal'' property and thus solve {also} generalized theories of electrodynamics. In fact, it was already known to Schr{\"o}dinger \cite{Schroedinger35,Schroedinger43} that all null Maxwell fields solve the equations for the electromagnetic field in any non-linear electrodynamics (NLE; cf., e.g., \cite{Plebanski70}) -- this was later extended to the full Einstein-Maxwell equations including the electromagnetic backreaction on the spacetime geometry \cite{Kichenassamy59,KreKic60,Peres61}. 
However, the case of theories more general than NLE (including also derivatives of the field strength in the Lagrangian, cf., e.g., \cite{Bopp40,Podolsky42}) seems not to have been investigated systematically from this viewpoint. 
As a first step in this direction, it is the purpose of the present paper to determine the class of electromagnetic fields for which all the scalar invariants constructed from the field strength {and its derivatives} vanish identically (VSI), which is obviously a subset of null fields. We will also point out a few examples possessing the universal property, while a more detailed study of the latter will be presented elsewhere.

Various extensions of Einstein-Maxwell gravity exist in which the number of spacetime dimensions~$n$ may be greater than four, and the electromagnetic field is represented by a rank-$p$ form (cf., e.g., \cite{Teitelboim86,HenTei86}). These theories have attracted interest in recent years, motivated, in particular, by supergravity and string theory. We will therefore consider null fields with arbitrary $n\ge3$ and $p$ (with $1\le p\le n-1$ to avoid trivial cases).
The relevant generalization of the concept of null fields is straightforward: all the {zeroth-order} scalar polynomial invariant constructed out of a $p$-form $\bF$ vanish (thus generalizing \eqref{null}) iff \cite{Hervik11} $\bF$ is of type N in the null alignment classification of \cite{Milsonetal05} (cf. Corollary~\ref{lemma_VSI0} below; for the case $p=2$ this was proven earlier in \cite{Coleyetal04vsi}). For this reason, in this paper  we shall use the terminology ``null'' and ``type N'' interchangeably, when referring to $p$-forms.

It will be also useful to observe that the type N condition of \cite{Milsonetal05} for $p$-forms can be easily rephrased in a manifestly frame-independent way, which we thus adopt as a definition here (see also \cite{Sokolowskietal93} when $p=2$):
\begin{definition}[$p$-forms of type N]
	\label{def_N}
 At a spacetime point, a $p$-form $\bF$ is of type N if it satisfies
\be
	\ell^a F_{a b_1\ldots b_{p-1}}=0 , \qquad \ell_{[a}F_{b_1\ldots b_{p}]}=0 ,
	\label{BelDeb}
\ee
where $\bl$ is a null vector (this follows from \eqref{BelDeb} and need not be assumed). The second condition can be equivalently replaced by $\ell^a\, {}^*F_{a b_1\ldots b_{n-p-1}}=0$ (cf. \cite{syngespec,Hallbook} for $n=4$, $p=2$).
\end{definition}

The most general algebraic form of a null $p$-form $\bF$ is thus known (eq.~\eqref{F_N} below). We will determine what are the necessary and sufficient conditions for a $p$-form $\bF$ living in a certain spacetime to be VSI, in the sense of the following definition:

\begin{definition}[VSI tensors]
	\label{def_VSI}
 A tensor in a spacetime with metric $g_{ab}$ is VSI$_I$ if {the} scalar polynomial invariants constructed from the tensor {itself} and its covariant derivatives up to order $I$ {($I=0,1,2,3,\dots$)} vanish. It is VSI if all its scalar polynomial invariants of {\em arbitrary order} vanish.  As in {\cite{Coleyetal04vsi,Pelavasetal05}}, if the Riemann tensor of $g_{ab}$ is VSI (or VSI$_I$), the spacetime itself is said to be VSI (or VSI$_I$).
\end{definition}

For the purposes of the present paper, it will be convenient to recall Corollary~3.2 of \cite{Hervik11}, which can be expressed as
\begin{theorem}[Algebraic VSI theorem \cite{Hervik11}]
 \label{alg_theor}
	A tensor is VSI$_0$ iff it is of type III (or more special).
\end{theorem}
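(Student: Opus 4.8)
For completeness we outline a proof of Theorem~\ref{alg_theor} (cf.\ \cite{Hervik11}), treating the two implications in turn; the first is elementary, the second rests on real geometric invariant theory. \textbf{Type III implies VSI$_0$.} Pick a null frame $(\bl,\bn,m^{(i)})$ adapted to the aligned null direction of the tensor $T$, so that every nonzero frame component of $T$ has strictly negative boost weight under the boosts $\bl\mapsto\lambda\,\bl$, $\bn\mapsto\lambda^{-1}\bn$, $m^{(i)}\mapsto m^{(i)}$ with $\lambda>0$. Written out in this frame, any polynomial scalar invariant of $T$ is a finite sum of monomials, each a product of frame components of $T$ with all indices contracted through the frame metric. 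The frame metric pairs an $\bl$-index with an $\bn$-index (opposite boost weights) or two $m^{(i)}$-indices (weight zero), so within each monomial the boost weights of the factors sum to zero; but each monomial contains at least one factor and all factors have strictly negative boost weight, so each monomial is homogeneous of strictly negative total boost weight. Being at the same time boost-invariant, every monomial --- hence the whole invariant --- must vanish.

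\textbf{VSI$_0$ implies type III.} Regard $T$ as a vector $v$ in the real, finite-dimensional representation space $V$ of the Lorentz group $G=O(1,n-1)$ consisting of tensors of the given rank and index symmetries. The assumption that all zeroth-order polynomial scalar invariants of $T$ vanish says precisely that every non-constant $G$-invariant polynomial on $V$ vanishes at $v$. Because $G$ is a real reductive group (with finitely generated invariant ring), the real versions of the basic facts of invariant theory --- closed orbits are separated by invariants, together with the Hilbert--Mumford/Kempf--Ness criterion --- then yield that $0$ lies in the Euclidean closure $\overline{G\cdot v}$ and, moreover, that there is a one-parameter subgroup $t\mapsto\gamma(t)\in G$ with $\gamma(t)\cdot v\to 0$ as $t\to\infty$.

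\textbf{From the degenerating one-parameter subgroup to the alignment; the main obstacle.} It remains to turn the existence of such a $\gamma$ into an alignment statement. Since $O(1,n-1)$ has real rank one, up to conjugation the image of $\gamma$ lies in a fixed maximal torus and splits as a boost $\mathrm{diag}(e^{t},e^{-t},1,\dots,1)$ in a suitable null frame times a bounded rotation in the remaining $n-2$ spacelike directions; as this rotation commutes with the boost it preserves the boost-weight grading, so $\gamma(t)\cdot v\to 0$ holds if and only if $v$ has no frame component of nonnegative boost weight --- which, the boost weights of tensor components being integers, is exactly the condition that $T$ be of type III (or more special, i.e.\ of type N or O). The real crux of the argument --- and the heart of Hervik's treatment --- is making this chain rigorous over $\mathbb{R}$ rather than $\mathbb{C}$: in particular, establishing the orbit-closure characterization of the common zero set of the invariants and verifying that the destabilizing one-parameter subgroup can be taken to be a genuine boost attached to a null direction of spacetime, which is where the Lorentzian signature and the noncompactness of $G$ really intervene. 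A machinery-free alternative --- exhibiting, for any tensor that fails to be type III, an explicit nonvanishing invariant by descending the alignment hierarchy type I$\,\to\,$II$\,\to\,$III --- is feasible in low rank but becomes combinatorially unwieldy in general, so the invariant-theoretic route is the natural one.
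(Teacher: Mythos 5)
This theorem is not proved in the paper at all: it is imported verbatim as Corollary~3.2 of \cite{Hervik11}, so there is no in-paper argument to compare against. Your sketch follows essentially the same route as that reference. The direction ``type III $\Rightarrow$ VSI$_0$'' is complete and correct as written: by the first fundamental theorem for the (pseudo-)orthogonal group every polynomial invariant is a polynomial in complete contractions, each monomial in the frame expansion of a contraction has total boost weight zero, and a product of factors of strictly negative boost weight cannot have total weight zero, so every such monomial vanishes. (One could state explicitly that you are invoking the FFT to identify ``all contractions vanish'' with ``all invariants vanish,'' but this is a presentational point.)

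The converse direction is a correct outline of Hervik's argument rather than a proof. The two load-bearing inputs are exactly the ones you name but do not establish: (i) that the vanishing at $v$ of all real $G$-invariant polynomials, $G=O(1,n-1)$, forces $0\in\overline{G\cdot v}$ in the Euclidean topology --- this is genuinely nontrivial over $\mathbb{R}$, since real closed orbits need not be separated by real invariants in the naive way, and it is supplied by the Richardson--Slodowy theory of minimal vectors; and (ii) the real Hilbert--Mumford/Birkes theorem producing a destabilizing one-parameter subgroup defined over $\mathbb{R}$, which (real rank one) can then be conjugated into a pure boost, whence ``$\gamma(t)\cdot v\to0$'' is literally ``all components of non-negative boost weight relative to the null direction $\bl$ annihilated by the boost vanish,'' i.e.\ type III or more special. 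Since you explicitly flag these as the crux and defer them to \cite{Hervik11}, the proposal is an honest and correctly structured sketch, but it should not be read as a self-contained proof of the hard implication.
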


Again, the tensor type refers to the algebraic classification of \cite{Milsonetal05} (see also the review \cite{OrtPraPra13rev}). Now, recalling that a non-zero $p$-form $\bF$ can only be of type G, II {(D)} or N ({cf. \cite{Durkeeetal10,Coleyetal04vsi,HerOrtWyl13}}),\footnote{However, it is well-known that for $p=2$ and $n=4$ the only possible types are D and N \cite{Stephanibook,Hallbook}. Note also that the type G does not occur for $p=2$ and $n$ even \cite{Sokolowskietal93,BerSen01,Milson04}. {For $p=1$ the type is G, D or N when the vector $\bF$ is timelike, spacelike or null, respectively (cf., e.g., \cite{OrtPraPra13rev}; in particular, $p=1$ is the only case of interest when $n=3$)}.} it immediately follows from Theorem~\ref{alg_theor} that 
\begin{cor}[VSI$_0$ $p$-forms] 
 \label{lemma_VSI0}
	A $p$-form $\bF$ is VSI$_0$ iff it is is of type N. 
\end{cor}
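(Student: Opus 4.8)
The plan is to read the statement off directly from Theorem~\ref{alg_theor} together with the restriction, recalled just above, on the admissible algebraic types of a $p$-form. First I would dispose of the trivial case $\bF\equiv 0$, which is manifestly VSI$_0$ and, by convention, of type N. Assume then $\bF\neq 0$. Theorem~\ref{alg_theor} says that $\bF$ is VSI$_0$ if and only if it is of type III or more special, i.e.\ of type III or of type N. The whole content of the corollary is therefore the observation that a $p$-form cannot be of the intermediate type III, so that ``type III or more special'' collapses to ``type N''.

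To keep this self-contained I would recall briefly why type III is excluded. In any null frame $\{\bl,\bn,\ldots\}$ with $\ell^a n_a=1$, total antisymmetry of $\bF$ forces each frame component to have at most one leg along $\bl$ and at most one along $\bn$; hence its boost weight, equal to (number of $\bl$-legs) minus (number of $\bn$-legs), lies in $\{-1,0,+1\}$. This range is too narrow for the full G/I/II/III/N ladder to unfold, and the explicit type analysis of $p$-forms in \cite{Durkeeetal10,Coleyetal04vsi,HerOrtWyl13} (and \cite{Sokolowskietal93,BerSen01,Milson04} for $p=2$) shows that a non-zero $p$-form is in fact of type G, II (D) or N. Combining the two facts: $\bF$ is VSI$_0$ $\iff$ it is of type III or more special $\iff$ (type III being unavailable) it is of type N. As a check, for $n=4$, $p=2$ only types D and N occur, type D violates \eqref{null}, and one recovers the classical characterisation of null Maxwell fields.

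There is essentially no obstacle: the corollary is a one-line consequence of Theorem~\ref{alg_theor}. The only point deserving a moment's care is the bookkeeping of the speciality hierarchy ($\mathrm{G}\to\mathrm{I}\to\mathrm{II}\,(\mathrm{D})\to\mathrm{III}\to\mathrm{N}$): one must be sure that types II and D are \emph{more general} than type N, hence are \emph{not} ``type III or more special'', so that the non-null (type II/D) $p$-forms are correctly excluded from the VSI$_0$ class.
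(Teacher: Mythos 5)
Your proposal is correct and follows exactly the paper's route: invoke Theorem~\ref{alg_theor} (VSI$_0$ $\Leftrightarrow$ type~III or more special) and combine it with the fact that a non-zero $p$-form can only be of type G, II (D) or N, so that the type~III option collapses to type~N. Your boost-weight remark (components of a $p$-form have b.w.\ in $\{-1,0,+1\}$, so ``type III'' already forces pure b.w.\ $-1$, i.e.\ type N) is a valid and slightly more self-contained way of seeing the same exclusion.
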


\subsection{Main result}

The main result of this paper {(proven in appendix \ref{app_proof})} is the following
\begin{theorem}[VSI $p$-forms]
 \label{theor}
The following two conditions are equivalent:

\begin{enumerate}

\item\label{cond1} a non-zero $p$-form field $\bF$ is VSI in a spacetime with metric $g_{ab}$
\item\label{cond2} 
 \begin{enumerate}
		\item\label{cond2a} $\bF$ possesses a multiple null direction $\bl$, i.e., it is of type N
		\item\label{cond2b} {$\pounds_{\bl}\bF=0$} %$\nabla_{\bl}\bF=0$ 
		\item\label{cond2c} $g_{ab}$ is a {\em degenerate Kundt} metric, and $\bl$ is the corresponding Kundt null direction.
 \end{enumerate}

\end{enumerate}

\end{theorem}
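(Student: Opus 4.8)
The plan is to prove the equivalence in two directions, with the implication $\text{\ref{cond2}}\Rightarrow\text{\ref{cond1}}$ being the more constructive part and $\text{\ref{cond1}}\Rightarrow\text{\ref{cond2}}$ requiring the algebraic machinery. For the direction $\text{\ref{cond2}}\Rightarrow\text{\ref{cond1}}$, I would work in an adapted null frame $(\bl,\bn,\bbm_{(i)})$ in which the degenerate Kundt metric takes its canonical form: $\bl$ is a non-expanding, non-twisting, shearfree, recurrent null geodesic (the ``Kundt'' conditions), and the metric functions are independent of the affine-type parameter along $\bn$ after a suitable choice (the ``degenerate'' refinement). By Definition~\ref{def_N} the type N $p$-form reads $\bF = \bl \wedge \bk$ (schematically, eq.~\eqref{F_N}) with $\bk$ a $(p-1)$-form built only from the $\bbm_{(i)}$. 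The key computation is to show that the boost-weight structure is preserved under covariant differentiation: I would check that $\nabla\bl$ has only negative-or-zero boost weight components, that $\pounds_{\bl}\bF=0$ together with the Kundt conditions forces $\nabla\bF$ to again be a sum of terms each containing at least one factor $\bl$ with the appropriate alignment, and then argue inductively that every $\nabla^{(I)}\bF$ is ``aligned'' in the sense that contracting all indices with the dual frame gives zero unless one pairs a lower $\bl$-index with an upper $\bn$-index, which is impossible to do an even number of times while keeping everything else balanced. Hence all polynomial scalar invariants of $\bF$ to all orders vanish. This is essentially the same boost-weight-counting argument used to establish that degenerate Kundt metrics are CSI/VSI, adapted to the $p$-form.

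For the converse $\text{\ref{cond1}}\Rightarrow\text{\ref{cond2}}$: Corollary~\ref{lemma_VSI0} (via the algebraic VSI theorem, Theorem~\ref{alg_theor}) immediately gives condition~\ref{cond2a}, that $\bF$ is type N, since VSI implies VSI$_0$. The remaining work is to extract conditions \ref{cond2b} and \ref{cond2c} from the vanishing of the \emph{first-order} (and, where needed, second-order) invariants. I would write $\bF$ in the type-N form, introduce the optical scalars of $\bl$ (expansion, shear, twist) and the other Ricci rotation coefficients, and compute the simplest non-trivial invariants of $\nabla\bF$ and $\nabla\nabla\bF$ — e.g.\ traces of $(\nabla\bF)\cdot(\nabla\bF)$ and mixed contractions of $\bF$ with $\nabla\nabla\bF$ — expanding them in boost weight. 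Demanding that these vanish for \emph{all} such invariants should force: (i) $\bl$ to be geodesic and non-expanding, non-shearing, non-twisting, i.e.\ Kundt (this is the standard optical-scalar argument, cf.\ the VSI metric proofs); (ii) the boost-weight-zero part of $\nabla\bl$ and of the ``Kundt connection'' to itself be VSI, which is exactly the degenerate Kundt refinement (one invokes the characterization of degenerate Kundt as the largest class of Kundt metrics that are CSI$_0$/VSI$_0$-preserving under the connection, cf.\ \cite{Coleyetal04vsi}); and (iii) the positive-boost-weight part of $\nabla\bF$ to vanish, which unpacks precisely to $\pounds_{\bl}\bF=0$ once one also uses that $\bl$ is recurrent.

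The main obstacle I anticipate is step (ii)–(iii) of the converse: showing that the vanishing of \emph{all} scalar invariants, rather than finitely many, is genuinely needed and genuinely suffices to pin down the degenerate Kundt condition and $\pounds_{\bl}\bF=0$ simultaneously. The difficulty is that $\nabla\bF$ mixes the geometry of $\bl$ (through $\nabla\bl$) with the ``internal'' derivatives of the profile $\bk$, so one must disentangle which part of a vanishing invariant constrains the metric and which constrains $\bF$. The cleanest route is probably to first use the known VSI-metric classification results to handle the purely gravitational content — but here $g_{ab}$ is \emph{not} assumed VSI, so instead one should argue directly: form invariants that are quadratic in $\bF$ and involve arbitrarily many derivatives, exploit that $\bF\otimes\bF$ has a definite boost-weight-$(+2)$ leading term that can only be cancelled in an invariant by pairing with suitable connection pieces, and show this forces the would-be obstructions (expansion, twist, shear, the $\bn$-derivative of the metric, and $\pounds_{\bl}\bF$) to vanish order by order. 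Verifying that this termination/induction actually closes — i.e.\ that no ``higher-order conspiracy'' allows a non-degenerate-Kundt or non-$\bl$-invariant $\bF$ to still be VSI — is the crux, and is presumably where the appendix's detailed frame calculation does its real work.
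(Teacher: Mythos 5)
Your high-level architecture matches the paper's (two directions, type N from the algebraic VSI theorem, boost-weight counting in a parallelly transported frame), but both directions as sketched have genuine gaps, and you also misstate the hypotheses: you repeatedly treat $\bl$ as \emph{recurrent}, which is neither part of the Kundt condition ($L_{i0}=0=L_{ij}$ only) nor of the degenerate refinement (which constrains the $r$-dependence of $W_\alpha$ and $H$, equivalently $R_{010i}=0$ and $DR_{0101}=0$; it does not make the metric functions $r$-independent). In the direction \ref{cond2}$\,\Rightarrow\,$\ref{cond1}, the observation that $\bF$ and $\nabla\bl$ have only negative (or non-positive) b.w.\ components is not enough to close the induction: the derivative operator $D=\ell^a\nabla_a$ \emph{raises} boost weight by one, so a b.w.\ $-1$ component of $\nabla^{(I)}\bF$ threatens to produce a b.w.\ $0$ component of $\nabla^{(I+1)}\bF$. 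What is actually needed is the \emph{balanced scalar} property (Definition~\ref{def_balanced}): each component $\eta$ of b.w.\ $b<0$ must satisfy $D^{-b}\eta=0$. Establishing that this property propagates under $\nabla$ (Lemma~\ref{lemma_deriv}) is precisely where conditions \ref{cond2b} and \ref{cond2c} enter, via the Ricci and Bianchi identities of degenerate Kundt spacetimes, which yield $DL_{1i}=0=DL_{i1}$, $D\M{i}{j}{k}=0$, $D^2N_{ij}=0$, $D^2L_{11}=0$, $D^3N_{i1}=0$ (Lemma~\ref{lemma_balanced}). Your sketch omits this mechanism entirely, and your parity argument about pairing $\bl$- with $\bn$-indices is not the operative one — the correct statement is simply that a full contraction of tensors all of whose components have negative b.w.\ is the b.w.-$0$ part of a product with only negative-b.w.\ components, hence zero.

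In the converse direction the first missing idea is \emph{alignment}: Hervik's theorem applied separately to $\bF$, $\nabla\bF$, $\nabla^{(2)}\bF,\dots$ gives that each is of type III, but not that they share the same multiple null direction $\bl$; without this, none of your subsequent frame computations are licensed. The paper proves this as Lemma~\ref{lemma_VSI} (using the auxiliary product $\bT\otimes\nabla\bT$ to rule out positive-b.w.\ components of $\nabla\bT$). Second, your proposed route to the degenerate condition — invoking a ``characterization of degenerate Kundt as the largest class of Kundt metrics that are CSI$_0$/VSI$_0$-preserving under the connection'' — is not an available theorem and cannot simply be cited. The paper's actual mechanism is concrete and elementary: form $T_{ab}=F_{ac_1\ldots c_{p-1}}F_b{}^{c_1\ldots c_{p-1}}\propto\F^2\ell_a\ell_b$, which has pure b.w.\ $-2$; since $\bF$ and all its derivatives have b.w.\ $\le -1$, every $\nabla^{(I)}\bT$ must have b.w.\ $\le -2$, so its b.w.\ $-1$ components must vanish. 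Setting $T^{(1)}_{i10}=0$ and $T^{(2)}_{01ij}=0$ gives $L_{i0}=0$ and $L_{ij}=0$ (Kundt); $(\nabla\bF)_{01i_1\ldots i_{p-1}}=0$ gives condition~\ref{cond2b}; $T^{(2)}_{11i0}=0=T^{(2)}_{i110}$ give $DL_{1i}=0=DL_{i1}$, hence $R_{010i}=0$ by the Ricci identities; and $T^{(3)}_{11100}=0$ gives $D^2L_{11}=0$, hence $DR_{0101}=0$, which is the degenerate condition. (Note also your claim that $\bF\otimes\bF$ has a b.w.\ $+2$ leading term is backwards for a type N form.) This also shows only derivatives up to third order are needed, which your ``order by order'' induction would not reveal.
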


We can already make a few observations about the implications of Theorem~\ref{theor}.

\begin{remark}[Degenerate Kundt metrics] 
	The definition of degenerate Kundt metrics \cite{ColHerPel09a,Coleyetal09} is reproduced in Appendix~\ref{app_Kundt_II} (Definition~\ref{def_deg}). By Proposition~\ref{prop_Kundt_deg} (with Propositions~7.1 and 7.3 of \cite{OrtPraPra13rev}), it follows that, e.g., all VSI spacetimes, all \pp waves, and all Kundt Einstein (in particular, Minkowski and (A)dS) or {aligned} pure radiation spacetimes necessarily belong to this class. As a consequence, when considering VSI $p$-forms {\em coupled to gravity}, the ``degenerate'' part of condition~\ref{cond2c} above becomes automatically trivially true (since a null $p$-form gives rise to {an aligned} pure radiation term in the energy-momentum tensor, cf. section~\ref{sec_EM}). It is also worth remarking that, in four dimensions, the degenerate Kundt spacetimes are the only metrics not determined by their curvature scalar invariants \cite{ColHerPel09a}, and they are thus of particular relevance for the equivalence problem \cite{ColHerPel09a,Coleyetal09}. 
	\label{rem_deg}
\end{remark}

\begin{remark}[{VSI vector field}]
Condition~\ref{cond2c} of Theorem~\ref{theor} implies that any {\em affinely parametrized} principal null vector $\bl$ of a VSI  $\bF$ is itself VSI.  Indeed, in the special case $p=1$, thanks to condition~\ref{cond2c}, condition~\ref{cond2a} is trivially satisfied and condition~\ref{cond2b} simply means that $\bl$ is affinely parametrized; i.e., for $p=1$, Theorem~\ref{theor} reduces to: {\em a vector field $\bl$ is VSI in a spacetime with metric $g_{ab}$ iff $\bl$ is {Kundt and affinely parameterized}, and $g_{ab}$ is a degenerate Kundt metric w.r.t $\bl$.} 
\label{rem_p=1}
\end{remark}

\begin{remark}[{Theory independence of the result}]
	In Theorem~\ref{theor}, the $p$-form $\bF$ is not assumed to satisfy any particular field equations and the result is thus rather general. On the other hand, {\em if} $\bF$ is taken to be  {\em closed} (i.e., $\d\bF=0$) 
	then condition~\ref{cond2b} automatically follows {from the type N condition \ref{cond2a}}, and need not be assumed. Otherwise, condition~\ref{cond2b} is needed to ensure (together with \ref{cond2a} and \ref{cond2c}) that $\nabla\bF$ is of type III (or more special). {Note that $\pounds_{\bl}\bF=\nabla_{\bl}\bF$ if $\bF$ is of type N and $\bl$ is Kundt.}
  \label{rem_theor_ind}
\end{remark}

\begin{remark}[{VSI$_3\Rightarrow$VSI for a $p$-form}]
	From the proof of Theorem~\ref{theor} (Appendix~\ref{app_proof}) it follows, in fact, that if $\bF$ is VSI$_3$ then it is necessarily VSI. Condition~\ref{cond1} could thus be accordingly relaxed in the proof ``\ref{cond1}. $\Rightarrow$ \ref{cond2}.''. By contrast, recall that in the case of the Riemann tensor one has  $VSI_2\Rightarrow VSI$ {\cite{Pravdaetal02,Coleyetal04vsi,Pelavasetal05}}. For completeness, necessary and sufficient conditions for $\bF$ to be VSI$_1$ or VSI$_2$ are given in Appendix~\ref{app_VSI_1_2}. 
	\label{rem_VSI}
\end{remark}

\begin{remark}[$\epsilon$-property]
 From Lemma~\ref{lemma_VSI} (Appendix~\ref{app_proof}), it is easy to see that a tensor is VSI if and only if, given an arbitrary non-negative integer $N$, there exist a reference frame in which its components and those of its covariant derivatives up to order $N$ can be made as small as desired (the proof of this statement is essentially the same as the one given in \cite{Pelavasetal05} for the Riemann tensor). This applies, in particular, to a $p$-form $\bF$ and is an extension of the early observations of \cite{Schroedinger35,syngespec} for the case $N=0$.
	\label{rem_epsilon}
\end{remark}

In the rest of the paper we discuss {further} implications of Theorem~\ref{theor} from a physical viewpoint. In section~\ref{sec_VSI} we give the explicit form {of VSI $p$-forms}, the associated (degenerate Kundt) background metric and the corresponding Maxwell equations in adapted coordinates. We also observe that null (and thus VSI) Maxwell fields are ``immune'' to adding a Chern-Simons term to the Maxwell equations (except when it is linear, i.e., for $n=2p-1$). More generally, we point out that certain VSI Maxwell fields solve {\em any} generalized electrodynamics. In section~\ref{sec_EM} we consider VSI $p$-forms in the full Einstein-Maxwell theory (which solve also certain supergravities due to the vanishing of the Chern-Simons term). We discuss consequences of the Einstein equations, also mentioning a few examples, and we comment on various subclasses of exact solutions (such as VSI spacetimes and \pp waves) with arbitrary $n$ and $p$. We observe that certain VSI Maxwell fields solve any generalized electrodynamics also when the coupling to gravity is kept into account. 
Appendix~\ref{app_D_Kundt} contains a summary of properties of Kundt spacetimes which are useful in this paper. {Appendix~\ref{app_proof} gives the proof of Theorem~\ref{theor}. For completeness, in Appendix~\ref{app_VSI_1_2} we present the necessary and sufficient conditions for a $p$-form to be VSI$_1$ or VSI$_2$, which can be seen as an ``intermediate'' result between {Corollary}~\ref{lemma_VSI0} and Theorem~\ref{theor}.}

\subsubsection*{Notation}

In an $n$-dimensional spacetime we set up a frame of $n$ real vectors $\bm{a} $ which consists of two null vectors $\bl\equiv{\mbox{\boldmath{$m_{(0)}$}}}$,  $\bn\equiv{\mbox{\boldmath{$m_{(1)}$}}}$ and $n-2$ orthonormal spacelike vectors $\bm{i} $, with $a, b\ldots=0,\ldots,n-1$ while $i, j  \ldots=2,\ldots,n-1$ \cite{Pravdaetal04} (see also the review \cite{OrtPraPra13rev} and references therein). For indices $i, j, \ldots$ there is no need to distinguish between subscripts and superscripts. 
The Ricci rotation coefficients $L_{ab}$, $N_{ab}$ and $\M{i}{a}{b}$ are defined by \cite{Pravdaetal04}
\be
 L_{ab}=\ell_{a;b} , \qquad N_{ab}=n_{a;b}  , \qquad \M{i}{a}{b}=m^{(i)}_{a;b} ,
 \label{Ricci_rot}
\ee
and satisfy the identities \cite{Pravdaetal04}
\be 
 L_{0a}=N_{1a}=N_{0a}+L_{1a}=\M{i}{0}{a} + L_{ia} = \M{i}{1}{a}+N_{ia}=\M{i}{j}{a}+\M{j}{i}{a}=0 . 
 \label{const-scalar-prod}
\ee

Covariant derivatives along the frame vectors are denoted as
\be
D \equiv \ell^a \nabla_a, \qquad \T\equiv n^a \nabla_a, \qquad \delta_i \equiv m^{(i)a} \nabla_a . 
 \label{covder}
\ee

\section{Explicit form of VSI electromagnetic fields and adapted coordinates}

\label{sec_VSI}

In this section we study VSI Maxwell {\em test} fields, i.e., without taking into account their backreaction on the spacetime geometry (which will be discussed in section~\ref{sec_EM}).

\subsection{Electromagnetic field and spacetime metric}

\label{subsec_explicit}

It is useful to express explicitly the conditions~\ref{cond2} of Theorem~\ref{theor} in a null frame adapted to $\bl$ (but otherwise arbitrary), as defined above. Condition~\ref{cond2a} reads {(cf., e.g., \cite{Durkeeetal10})}
\be
 F_{a b_1\ldots b_{p-1}}=p!F_{1i_1\ldots i_{p-1}}\ell_{[a}m^{(i_1)}_{\,b_1}\ldots m^{(i_{p-1})}_{\,b_{p-1}]} .
 \label{F_N}
\ee

The ``Kundt part'' of condition \ref{cond2c} means \eqref{Kundt}, i.e.,
\be
 L_{i0}=0, \qquad L_{ij}=0 . 
 \label{kundt1}
\ee
Conditions \ref{cond2b} and the remaining part of condition \ref{cond2c} are more conveniently represented in a null frame {\em parallelly transported along $\bl$} (i.e., such that \eqref{Kundt_rot} holds), where they take the form (cf. also appendix~\ref{app_D_Kundt}, and recall that a Kundt metric for which the Riemann tensor and its {\em first} covariant derivative are of aligned type II is necessarily degenerate Kundt \cite{ColHerPel09a,Coleyetal09})
\be
 DF_{1i_1\ldots i_{p-1}}=0 , \qquad R_{010i}=0 , \qquad DR_{0101}=0 , \qquad \mbox{with \eqref{Kundt_rot}}.
	\label{DF_DR}
\ee

In adapted coordinates, degenerate Kundt metrics are described by \cite{ColHerPel09a,Coleyetal09}

\be
 \d s^2 =2\d u\left[\d r+H(u,r,x)\d u+W_\alpha(u,r,x)\d x^\alpha\right]+ g_{\alpha\beta}(u,x) \d x^\alpha\d x^\beta , \label{Kundt_gen}
\ee
where $\bl=\partial_r$ is the Kundt vector, $\alpha,\beta=2 \dots n-1$, $x$ denotes collectively the set of coordinates $x^\alpha$, and $W_{\alpha,rr}=0=H_{,rrr}$ (thanks to which the second and third of~\eqref{DF_DR} are identically satisfied), i.e.,
\beqn
 & & W_{\alpha}(u,r,x)=rW_{\alpha}^{(1)}(u,x)+W_{\alpha}^{(0)}(u,x) , \label{deg_Kundt1} \\
 & & H(u,r,x)=r^2H^{(2)}(u,x)+rH^{(1)}(u,x)+H^{(0)}(u,x) . \label{deg_Kundt2}
\eeqn
In these coordinates one has (cf. the coordinate-independent expression \eqref{dl})
\be
  \ell_{a;b}{\d x^a\d x^b}=(2rH^{(2)}+H^{(1)})\d u^2+\frac{1}{2}W_{\alpha}^{(1)}(\d u\,\d x^\alpha+\d x^\alpha\d u).
 \label{l_deriv}
\ee

 The corresponding VSI $p$-form \eqref{F_N} reads
\be
 \bF=\frac{1}{(p-1)!}f_{\alpha_1\ldots\alpha_{p-1}}(u,x)\d u\wedge\d x^{\alpha_1}\wedge\ldots\wedge\d x^{\alpha_{p-1}} ,
 \label{F_N_coords}
\ee
where $f_{\alpha_1\ldots\alpha_{p-1}}\equiv F_{u\alpha_1\ldots\alpha_{p-1}}$ is $r$-independent due to the first of~\eqref{DF_DR}. In these coordinates we have (with the definition~\eqref{F2})
\be
	\F^2=f_{\alpha_1\ldots\alpha_{p-1}}f^{\alpha_1\ldots\alpha_{p-1}} ,
\ee	
where, from now on, it is understood that the indices of $f^{\alpha_1\ldots\alpha_{p-1}}$ are raised using the transverse metric $g^{\alpha\beta}$. $\F^2$ parametrizes the field strength of $\bF$ and is invariant under Lorentz transformations preserving $\bl$, as well as under transformations of the spatial coordinates $x\mapsto x'(x)$.

At this stage, $g_{\alpha\beta}$, $W_{\alpha}^{(1)}$, $W_{\alpha}^{(0)}$, $H^{(2)}$, $H^{(1)}$, $H^{(0)}$ and $F_{u\alpha_1\ldots\alpha_{p-1}}$ are all arbitrary functions of $u$ and $x$ (recall that {the invariant condition $W_{\alpha}^{(1)}=0\Leftrightarrow L_{i1}=0$ defines a special subfamily of Kundt metrics, cf. Appendix~\ref{app_Kundt_gen}}). In general, the associated Weyl and Ricci tensor are both of aligned type II (as follows from {Definition~\ref{def_deg} in Appendix~\ref{app_Kundt_II}}). {The} b.w.~0 components of all curvature tensors (and thus also their curvature invariants, of all orders) of the metric~\eqref{Kundt_gen} are independent of the functions  $W_{\alpha}^{(0)}$, $H^{(1)}$ and $H^{(0)}$ \cite{ColHerPel10} (as summarized in Proposition~7.2 of \cite{OrtPraPra13rev}). Restrictions coming from the Einstein equations are described below in section~\ref{sec_EM}.
We emphasize here that although the $p$-form \eqref{F_N_coords} is VSI, the spacetime \eqref{Kundt_gen} (with \eqref{deg_Kundt1}, \eqref{deg_Kundt2}) in general is not (not even VSI$_0$) \cite{Pravdaetal02,Coleyetal04vsi,Coleyetal06}, i.e., it may admit some non-zero invariants constructed {from} the Riemann tensor and its derivatives. However, all mixed invariants (i.e., those involving Riemann {\em and} $\bF$ together, {along with} their derivatives) are necessarily zero (since $\bF$ possesses only negative  b.w.s and Riemann only non-positive ones, and similarly for their derivatives.)

The metric~\eqref{Kundt_gen} includes, in particular, spacetimes of constant curvature (Minkowski and (A)dS). Therefore, the {$p$-form \eqref{F_N_coords}} can be used to describe {VSI} test fields in such backgrounds, as a special case.

\subsection{Maxwell's equations}

The construction of VSI $p$-forms has been, so far, purely geometric. Using \eqref{Kundt_gen} and \eqref{F_N_coords}, the source-free Maxwell equations $\d\bF=0$ and {$\d^*\bF=0$ reduce, respectively, to
\be
   f_{[\a_2\ldots\a_{p-1},\a_1]}=0 , \qquad  (\sqrt{\tilde g}\,f^{\beta\a_1\ldots\a_{p-2}})_{,\beta}=0 , 
	\label{Maxwell}
\ee
where $\tilde g\equiv\det g_{\alpha\beta}=-\det g_{ab}\equiv -g$. Effectively, these are Maxwell's equations for the $(p-1)$-form $\bff$ in the $(n-2)$-dimensional Riemannian geometry associated with $g_{\alpha\beta}$. 
In other words, the most general VSI $\bF$ that solves Maxwell's equations is given by \eqref{F_N_coords} in the spacetime \eqref{Kundt_gen} with \eqref{deg_Kundt1}, \eqref{deg_Kundt2}, where $\bff$ is {\em harmonic} w.r.t. $g_{\alpha\beta}$.\footnote{Note that eqs.~\eqref{Maxwell} apply to type N Maxwell fields $\bF$ in {\em any} Kundt spacetime, so also to type N $\bF$ that are not VSI (i.e., we have not used \eqref{deg_Kundt1} and \eqref{deg_Kundt2} to obtain \eqref{Maxwell}). In the case $p=2$, this agrees with the results of \cite{PodZof09}, once specialized to Maxwell fields of type N.} Recall, however, that $\bff$ can also depend on $u$.

\subsubsection{{The special case $p=1$}}

\label{subsubsec_p=1}

{The case $p=1$ (or $p=n-1$ by duality), for which $\bF=\bl$ is a vector field, is special. Indeed by Theorem~\ref{theor} (see also Remark~\ref{rem_p=1}), if $\bl$ is VSI then it must be degenerate Kundt and affinely parametrized. Further, by a boost, $\bl$ can always be rescaled  (still remaining VSI) such that $L_{1i}=L_{i1}$ (cf., e.g., \cite{Coleyetal04vsi,HerPraPra14}). It is then easy to see that Maxwell's equations are satisfied (i.e., $\ell^a_{\ ;a}=0=\ell_{[a;b]}$, cf.~\eqref{dl}). Therefore, {\em for $p=1$ to any VSI $\bF$ it can always be associated a solution to the sourcefree Maxwell equations} (in this statement, ``VSI'' can also be relaxed to ``VSI$_1$'', since only the Kundt property of $\bl$ has been employed, cf. also Proposition~\ref{proposition_VSI1_2})
}

\subsubsection{The special cases $n=4,3$}

\paragraph{Case $n=4$}

We have seen (Theorem~\ref{theor}) that, for any $n$ and $p$, if a $p$-form $\bF$ is VSI then the multiply aligned null direction $\bl$ is necessarily (degenerate) Kundt and thus, in particular, geodesic and shearfree. In the case $n=4$, $p=2$, the Robinson theorem \cite{robinsonnull,Stephanibook,penrosebook2} then implies that the family of null bivectors associated with $\bl$ includes a solution to the sourcefree Maxwell equations.  Namely, from a VSI $\bF$ one can always obtain a solution $\bF'$ to the Maxwell equations by means of a Lorentz transformation (spin and boost) preserving the null direction defined by $\bl$ (of course, it may happen that $\bF$ is already a solution and no transformation is needed). The $2$-form $\bF'$ will thus be also multiply aligned with $\bl$ and therefore it will be still VSI (condition 2b of Theorem~\ref{theor} will hold due to $\d \bF'=0$,  cf. Remark \ref{rem_theor_ind}). {Along with the observation in section~\ref{subsubsec_p=1} for the case $p=1$ (or $p=3$), one concludes that} for $n=4$, given a VSI field, one can always associate with it a VSI solution to the sourcefree Maxwell equations ({as above, it suffices to assume that $\bF$ is VSI$_1$}). This does not seem to be true when $n>4$, in general.

\paragraph{Case $n=3$}

In three dimensions the situation is even simpler, {since the only case of interest is $p=1$ (dual to $p=2$). The observation in section~\ref{subsubsec_p=1} thus immediately implies that} also {\em for $n=3$ to any VSI {(or VSI$_1$)} $\bF$ it can always be associated a solution to the sourcefree Maxwell equations}.

\subsection{Maxwell-Chern-Simons' equations}

\label{subsec_CS}

Maxwell's equations \eqref{Maxwell} for a $p$-form field admit a generalization which includes a Chern-Simons term, i.e., $\d\bF=0$ and $\d^*\bF+\alpha\bF\wedge\ldots\wedge\bF=0$, where $\alpha\neq0$ is an arbitrary constant. The second term in the latter equation contains $k$ factors $\bF$, and the corresponding number of spacetime dimensions is given by $n=p(k+1)-1$. Such modifications of Maxwell's equations appear, for example, in (the bosonic sector of) minimal supergravity in five and eleven dimensions (with  $n=5$, $p=2$, $k=2$ and $n=11$, $p=4$, $k=2$, respectively -- cf., e.g., \cite{Ortinbook} and references therein).

\subsubsection{Generic case $k\ge2$}

\label{subsubsec_CS_generic}

Now, let us note that any $p$-form of type N \eqref{F_N} satisfies
\be
	\bF\wedge\bF=0 ,
\ee
so that {\em for a $\bF$ of type N, the Chern-Simons term vanishes identically}, provided $k\ge2$. Therefore, a type N solution $\bF$ of Maxwell's theory \eqref{Maxwell} is automatically also a solution of Maxwell-Chern-Simons' theory. This applies, in particular, to VSI solutions. This has been also noticed, e.g., in \cite{FigPap01} in the case of certain VSI \pp waves in $n=11$ supergravity.

\subsubsection{Special case $k=1$}

\label{subssubsec_CS_k=1}

The special case $k=1$ results in a linear theory 
\be
 \d\bF=0 , \qquad \d^*\bF+\alpha\bF=0 \qquad (n=2p-1) .
 \label{CS_k=1} 
\ee
It is clear that here the Maxwell equations are modified non-trivially also for type N fields. The second of \eqref{Maxwell} now has to be replaced by 
\be
  (\sqrt{\tilde g}\,f^{\beta\a_1\ldots\a_{p-2}})_{,\beta}-\alpha\sqrt{\tilde g}\star\! f^{\a_1\ldots\a_{p-2}}=0 ,
\ee
where $\star$ is the Hodge dual in the transverse geometry $g_{\alpha\beta}$ (not to be confused with the $n$-dimensional Hodge dual $^*$ in the full spacetime $g_{ab}$). {Recall that a linear Chern-Simon term appears, e.g., in  topological massive electrodynamics in three dimensions ($n=3$, $p=2$, $k=1$) {\cite{Schonfeld81,DesJacTem82} (and references therein)}.

\subsection{Universal solutions of generalized electrodynamics (test fields)}

\label{subsec_univ_test}

Theories of electrodynamics described by a Lagrangian depending also on the derivatives of the field strength have been proposed long ago in \cite{Bopp40,Podolsky42}. More recently, interest in higher-derivative theories has been also motivated by string theory, cf., e.g., \cite{AndTse88,Thorlacius98,CheKalOrt12} and references therein. In this context, the electromagnetic field is typically represented by a closed 2-form $\bF$ whose field equations contain ``correction terms'' constructed in terms of $\bF$ and its covariant derivatives.

The class of VSI Maxwell fields defined in this paper can be employed to identify a subset of VSI solutions that are ``universal'', i.e., solving simultaneously any electrodynamics  whose field equations can be expressed as $\d\bF=0$, $*\d\!*\!\!\tilde\bF=0$, where $\tilde\bF$ can be any $p$-form constructed from $\bF$ and its covariant derivatives.\footnote{We assume that $\tilde\bF$ is constructed polynomially from these quantities (however, the scalar coefficients appearing in such polynomials need not be polynomials of the scalar invariants of $\bF$ and its covariant derivatives -- cf., for instance, Born-Infeld's theory). The same type of construction will be assumed for the energy-momentum tensor of generalized theories in section~\ref{subsec_univ_EM} (but see also footnote~\ref{foot_UEM}).} {It can be shown, for example, that  any VSI Maxwell $\bF$ is universal if the background is a Kundt spacetime of Weyl and traceless-Ricci type III (aligned) with $DR=0=\delta_i R$. In particular, Ricci flat and Einstein Kundt spacetimes of Weyl type III/N/O {can occur} (an explicit example is given in section~\ref{subsec_univ_EM}), the latter including Minkowski and (A)dS.}  Details and more general examples (also of Weyl type II) will be presented elsewhere.

It should be emphasized that the above definition of universality includes terms with arbitrary higher-order derivative corrections. If one restricts oneself to theories in which $\tilde\bF$ is constructed algebraically from $\bF$, i.e., without taking derivatives of  $\bF$ (like NLE, for which {$n=2p=4$} and the modified Maxwell equations are of the form $\d(f_1{}^*\bF+f_2\bF)=0$), then these admit as solutions {\em all} null solutions of Maxwell's equations (not necessarily VSI), without any restriction on the background geometry. This has been known for a long time {in NLE} \cite{Schroedinger35,Schroedinger43} (see also, e.g., \cite{BicSla75}).

\section{Einstein-Maxwell solutions}

\label{sec_EM}

\subsection{General equations}
\label{sec_GE}

What discussed so far applies to VSI test fields, since we have not considered the consequences of the backreaction on the spacetime geometry. In the full Einstein-Maxwell theory this is described by the energy-momentum tensor associated with $\bF$ 
\begin{equation}\label{Energy momentum}
  T_{ab} = \frac{\b}{8\pi} \left(F_{ac_1\ldots c_{p-1}} {F_b}^{c_1\ldots c_{p-1}}-\frac{1}{2p} g_{ab}F^2 \right) ,
\end{equation}
where $F^2=F_{\a_1\ldots\a_{p}}F^{\a_1\ldots\a_{p}}$. With \eqref{F_N}, $T_{ab}$ in \eqref{Energy momentum} {takes the form of} aligned pure radiation, and Einstein's equations with a cosmological constant $R_{ab} - \frac{1}{2} R g_{ab} + \Lambda g_{ab} = 8 \pi T_{ab}$ {reduce to}
\begin{equation}\label{Ricci}
	R_{ab} = \frac{{2\Lambda}}{n-2} g_{ab} + \b\F^2\ell_a\ell_b .
\end{equation}

The Ricci tensor {of \eqref{Kundt_gen}} thus must satisy $R_{01}=\frac{2}{n-2} \Lambda$, $R_{ij}=\frac{2}{n-2} \Lambda\delta_{ij}$, $R_{1i}=0$ and $R_{11}=\b\F^2$ (with $R=2n\Lambda/(n-2)$, while $R_{00}=0=R_{0i}$ identically since the Riemann type is II by construction). Using these, the b.w.~0 components of the Einstein equations imply that in {\eqref{Kundt_gen}--\eqref{deg_Kundt2}}  (as follows readily from \cite{PodZof09})
\beqn
 & & \R_{\alpha\beta}=\frac{2\Lambda}{n-2}g_{\alpha\beta}+\frac{1}{2}W_{\alpha}^{(1)}W_{\beta}^{(1)}-W_{(\alpha||\beta)}^{(1)} , \label{Rij} \\
 & & 2H^{(2)}=\frac{\R}{2}-\frac{n-4}{n-2}\Lambda+\frac{1}{4}W^{(1)\alpha}W_{\alpha}^{(1)} , \label{H2} 
\eeqn
where $\R_{\alpha\beta}$, $\R$ and $||$ denote, respectively, the Ricci tensor, the Ricci scalar and the covariant derivative associated with $g_{\alpha\beta}$, and $W^{(1)\alpha}\equiv g^{\alpha\beta}W_{\beta}^{(1)}$. The first of these is an ``effective Einstein equation'' for the transverse metric, while the second one determines the function $H^{(2)}$. Note that the functions $W_{\alpha}^{(0)}$, $H^{(1)}$ and $H^{(0)}$ do not appear here. 

With \eqref{Rij}, the contracted Bianchi identity in the transverse geometry, i.e., ${2\R_{\alpha\beta}}^{||\B}=\R_{,\a}$, tells us  (after simple manipulations with the Ricci identity) that $W_{\a}^{(1)}$ is constrained by
\be
 W_{\a||\B}^{(1)\ \B}=\frac{1}{2}W^{(1)\B}\left(3W_{\a||\B}^{(1)}-W_{\B||\a}^{(1)}\right)+W^{(1)}_{\a}\left(W^{(1)\B}_{\ \ \ \ \ ||\B}-\frac{1}{2}W^{(1)\B}W^{(1)}_\B -\frac{2 \Lambda}{n-2}\right) .
\ee

Finally, the Einstein equations of negative b.w., which can be used to determine the functions $H^{(1)}$ and $H^{(0)}$, reduce to \cite{PodZof09}\footnote{Eqs.~\eqref{H1} and \eqref{H0} are equations (76) and (73) of \cite{PodZof09} (once the non-null part of the electromagnetic field is set to zero there, and up to minor reshuffling and different notation). Note that, {at least in the case of an aligned null $\bF$,} the remaining Einstein equations (71), (72) and (75) given in \cite{PodZof09} are identically satisfied as a consequence of \eqref{Rij}--\eqref{H0} of the present paper, and can thus be dropped ({it is plausible that this remains true also for an aligned non-null $\bF$, as known when $n=4$ \cite{Stephanibook}, and for $n>4$ in the special case $W_{\alpha}^{(1)}=0$ \cite{Krtousetal12} -- we have not tried to verify it since it is irrelevant to the present paper}). To verify this statement one needs repeated use of the Ricci identity and standard identities such as $g_{ab,c}=g_{ad}\Gamma^{d}_{\ bc}+g_{bd}\Gamma^{d}_{\ ac}$ and $\Gamma^{b}_{\ ab}=(\ln\sqrt{-g})_{,a}=\frac{1}{2}g^{bc}g_{bc,a}$. In the case of {(72,\cite{PodZof09})} one also needs the identity $g_{\a\B,u}^{\ \ \ \ \ ||\a\B}-g^{\a\B}g_{\a\B,u||\gamma}^{\ \ \ \ \ \ \ \ \gamma}=g^{\a\B}(\tilde\Gamma^\gamma_{\ \a\B,u||\gamma}-\tilde\Gamma^{\gamma}_{\ \gamma\B,u||\a})=g^{\a\B}\R_{\a\B,u}$ (where $\tilde\Gamma^\gamma_{\ \a\B,u||\gamma}$ are the Christoffel symbols of $g_{\a\B}$ -- in the last equality the (contracted) standard definition of the Riemann tensor has been employed).\label{foot_PZ}}
\beqn
 & & 2H^{(1)}_{,\a}=-{g_{\a\B,u}}^{||\B}+2W^{(0)\ \ \ \B}_{[\a||\B]}-2W^{(0)\B}W^{(1)}_{\a||\B}+(W^{(0)\B}W^{(1)}_\B)_{,\a}+W^{(1)}_{\a,u}+2(\ln\sqrt{\tilde g})_{,u\a} \nonumber \\
 & & \qquad\qquad {}+W^{(1)}_\a\left[W^{(0)\B}W^{(1)}_\B-W^{(0)\B}_{\ \ \ \ \ ||\B}+(\ln\sqrt{\tilde g})_{,u}\right]+\frac{4\Lambda}{n-2}W^{(0)}_\a , \label{H1} \\
 & & \Delta H^{(0)}+W^{(1)\a}H^{(0)}_{,\a}+W^{(1)\a}_{\ \ \ \ \ ||\a}H^{(0)}=W^{(0)\B}W^{(0)}_\B\left(\frac{1}{2}W^{(1)\a}_{\ \ \ \ \ ||\a}-\frac{2\Lambda}{n-2}\right) \nonumber \\
 & & \qquad\qquad {}+H^{(1)}\left[W^{(0)\a}_{\ \ \ \ \ ||\a}-(\ln\sqrt{\tilde g})_{,u}\right]-\frac{1}{2}(W^{(0)\a}W^{(1)}_\a)^2+W^{(0)[\a||\B]}W^{(0)}_{[\a||\B]}+W^{(0)\ ||\a}_{\a,u} \nonumber \\
 & & \qquad\qquad {}-W^{(0)\B}\left(2W^{(1)\a}W^{(0)}_{[\a||\B]}+W^{(1)}_{\B,u}-2H^{(1)}_{,\B}\right)-(\ln\sqrt{\tilde g})_{,uu}+\frac{1}{4}g^{\a\B}_{\ \ ,u}g_{\a\B,u} -\b\F^2 , \label{H0}
\eeqn
where $\Delta$ is the Laplace operator in the geometry of the transverse metric $g_{\a\B}$ (not to be confused with the symbol $\T$ defined in \eqref{covder}). In addition to \eqref{Rij}--\eqref{H0}, the equations for the electromagnetic field \eqref{Maxwell} must also be satisfied  (note that the only metric functions entering there are the $g_{\alpha\beta}$). {When $\F^2=0$, eqs.~\eqref{Rij}--\eqref{H0} represent the vacuum Einstein equations for the most general Kundt spacetime.}

Let us observe that the functions $\F^2(u,x)$ and $H^{(0)}(u,x)$ enter only {\eqref{H0}}, which is linear in  $H^{(0)}(u,x)$. It follows, e.g., that given any Kundt Einstein spacetime (necessarily of the form \eqref{Kundt_gen} with \eqref{deg_Kundt1}, \eqref{deg_Kundt2}), one can add to it an electromagnetic (and gravitational) wave by just appropriately choosing a new function $H^{(0)}(u,x)$, and leaving the other metric functions unchanged (amounting, in fact, to a generalized Kerr-Schild transformation; cf. Theorem~31.1 and section 31.6 of \cite{Stephanibook} in four dimensions). The resulting solution will describe a wave-like VSI $p$-form field propagating in the chosen Kundt Einstein spacetime. If the Einstein ``seed'' is VSI (or CSI), so will be the corresponding Einstein-Maxwell spacetime (see \cite{ColHerPel10} and the comments in section~\ref{subsec_explicit}).

The simplest such examples one can construct are electromagnetic and gravitational ``plane-fronted'' waves (with $W_{\alpha}^{(0)}=0$) propagating in a constant curvature background, giving rise to Kundt waves of Weyl type N. These are well-known in four dimensions for any value of $\Lambda$ \cite{GarPle81,OzsRobRoz85} (and include, e.g., the Siklos waves when $\Lambda<0$; see also \cite{Stephanibook,BicPod99I,GriPodbook}), and have been considered also in arbitrary dimensions \cite{Obukhov04} (for $p=2$, but a generalization to any $p$ is straightforward). Similarly, one can construct, e.g., electrovac waves of Weyl type II in (anti-)Nariai product spaces for $n=4$ \cite{PodOrt03} and higher \cite{Krtousetal12}. All these examples are CSI spacetimes, and for $\Lambda=0$ they become VSI ({\em pp}- or Kundt waves) spacetimes (cf. sections \ref{subsubsec_VSI} and \ref{subsubsec_pp}).
More general (e.g., with $W_{\alpha}^{(0)}\neq0$) degenerate Kundt metrics with null Maxwell fields are also known (see \cite{Stephanibook,GriPodbook,GriDocPod04} and references therein for $n=4$ and, e.g., \cite{CalKleZor07,Krtousetal12} in higher dimensions  -- more references in section~\ref{subsubsec_VSI} below in the case of VSI spacetimes).

In general, for all Einstein-Maxwell solutions with a VSI $p$-form $\bF$, since all the mixed invariants are zero (cf. section~\ref{subsec_explicit}) and since the Ricci tensor is constructed out of $\bF$ (eq.~\eqref{Ricci}), the only possible non-zero scalar invariants are those constructed from the Weyl tensor (and its derivatives), and the Ricci scalar $R$. We also remark that, thanks to the observations of section~\ref{subsec_CS}, all such Einstein-Maxwell solutions having $n=5$, $p=2$ or $n=11$, $p=4$ are also solutions (with the same $\bF$) of the bosonic sector of 5D minimal supergravity (ungauged or gauged) and 11D supergravity, respectively (further comments and references in section~\ref{subsubsec_VSI}).

\subsubsection{VSI spacetimes with VSI Maxwell fields}

\label{subsubsec_VSI}

The special case when the spacetime metric is VSI is of particular interest. All VSI metrics in HD (in particular, with Ricci type N) are given in \cite{Coleyetal06} (see also \cite{Coleyetal07}). In a VSI spacetime, one can always choose coordinates such that the metric is given by \eqref{Kundt_gen} with \cite{Coleyetal06}
\be
	g_{\alpha\beta}=\delta_{\alpha\beta} , \qquad W_{\alpha}^{(1)}=-\delta_{\alpha, 2} \frac{2 \epsilon}{x_2} , \qquad H^{(2)}=\frac{\epsilon}{2 (x^2)^2} \qquad (\epsilon=0,1) .
\ee

The VSI assumption implies $\Lambda=0$, so that the Einstein equations \eqref{Ricci} give $R_{ab} =\b\F^2\ell_a\ell_b$, i.e., the Ricci tensor is of aligned type N (and \eqref{Rij} and \eqref{H2} are satisfied identically). This constrains the functions $W_{\alpha}^{(0)}$, $H^{(1)}$ and $H^{(0)}$, as detailed in \cite{Coleyetal06}. The Weyl tensor is of type III aligned with $\bl=\pa_r$ (it becomes of type N for special choices of $W_{\alpha}^{(0)}$, including $W_{\alpha}^{(0)}=0$,\footnote{We note that, when the Weyl type is N, the metric takes the Kerr-Schild form -- the argument given in section~4.2.2 of \cite{OrtPraPra09} in the vacuum case holds also for Ricci type N. When the Weyl type is III, they are of the more general ``extended Kerr-Schild'' form \cite{Malek14}.} and of type O under further conditions on $H^{(0)}$ \cite{Coleyetal06}). 
The Maxwell field is still given by \eqref{F_N_coords}. Some VSI spacetimes (more general than VSI \pp waves) coupled to null $p$-forms have been discussed in \cite{Coleyetal07} in the context of type IIB supergravity (but note that a few of those are of Ricci type III due to the presence of an additional non-trivial dilaton\footnote{We did not consider a dilaton $\varphi$ in our discussion, but this can be easily included. The dilaton $\varphi$ itself cannot be VSI unless zero, being a scalar field. However, if we want $\varphi_{,a}$ to be VSI, then it must obviously be a null vector field. If we also require the mixed invariant ${F^a}_{c_1\ldots c_{p-1}} {F_b}^{c_1\ldots c_{p-1}}\varphi_{,a}\varphi_{,b}$ to vanish, we immediately obtain $\varphi_{,a}\propto \ell_a=(\d u)_a$, which implies $\varphi=\varphi(u)$ (as assumed in \cite{Guven87,HorSte90,Coley02,Coleyetal07}). This is also a sufficient condition for $\varphi_{,a}$ to be VSI (since $\varphi_{,a}=\varphi'\ell_a$ and $\bl$ is {degenerate Kundt, cf. Remark~\ref{rem_p=1}}) and for all mixed invariants (i.e., containing $\varphi$, the Riemann/Maxwell tensors {and their derivatives}) to vanish as well. However, $\varphi_{;ab}=\varphi''\ell_a\ell_b+\varphi'\ell_{a;b}$ is a symmetric rank-2 tensor (of aligned type III or more special) which contains also non-zero components of b.w. $-1$  iff $\varphi'W_{\alpha}^{(1)}\neq0$ (cf. \eqref{l_deriv}) {and appears on the r.h.s. of Einstein's equations \cite{Coleyetal07}, thus modifying the Ricci type}.\label{foot_dilat}}).

As a special subcase, for $\epsilon=0=H^{(1)}$ one obtains VSI \pp waves with a null $p$-form, for which the Weyl type can only be III(a) or more special (since the Ricci type is N \cite{Coleyetal06}). In 4D, these are well-known and necessarily of Weyl type N (which coincides with the type III(a) for $n=4$ \cite{OrtPraPra13rev}) or O  (cf. section~24.5 of \cite{Stephanibook} and references therein). In higher dimensions, their role in the context of supergravity and string theory has been known for some time, see, e.g., {\cite{KowalskiGlikman84,Hull84,Guven87,AmaKli89,HorSte90,Tseytlin93,BerKalOrt93}.} More recently, some of these (with $n$ arbitrary, $p=2,3$ and $W_{\alpha}=W_{\alpha}^{(0)}(u,x)\neq 0$) have been interpreted as ``charged gyratons'' \cite{FroZel06,FroLin06}.\footnote{It was indeed observed in \cite{FroZel06} that the ansatz used there leads to VSI spacetimes in which also all the electromagnetic scalar invariants vanish. This appears as a special subcase of the result given in Theorem~\ref{theor}.} From the viewpoint of supersymmetry, it is worth recalling also that a VSI spacetime admitting a timelike or null Killing vector field must necessarily be a VSI \pp wave (see the appendix of \cite{Coleyetal07}). Supersymmetric VSI \pp waves coupled to null $p$-forms are indeed well-known in various supergravities (see, e.g., \cite{Tod83,KowalskiGlikman84,Hull84,Guven87,BerKalOrt93,Gauntlettetal03} and references therein).

\subsubsection{\pp waves with VSI Maxwell fields}

\label{subsubsec_pp}

As mentioned above, spacetimes admitting a null Killing vector field are of special interest for supersymmetry. A null Killing vector field aligned with the Ricci tensor is necessarily Kundt (cf., e.g., Proposition~8.21 of \cite{OrtPraPra13rev}). Now, in the metric \eqref{Kundt_gen}, the Kundt vector $\bl=\pa_r$ is Killing iff  
\be
	W_{\alpha}^{(1)}=0 , \qquad H^{(2)}=0=H^{(1)} ,
\ee
which is equivalent to requiring \eqref{Kundt_gen} to be a \pp wave \cite{Brinkmann25}, i.e., $\ell_{a;b}=0$ {(cf. \eqref{l_deriv})}. 

For \pp waves the Einstein equations \eqref{Ricci} imply $\Lambda=0$ (cf., e.g., Proposition~7.3 of \cite{OrtPraPra13rev}), so that $R_{ab} =\b\F^2\ell_a\ell_b$ is of type N. By \eqref{Rij}, it follows that the transverse metric $g_{\alpha\beta}(u,x)$ must be Ricci-flat (if it is flat, as happens necessarily for $n=4,5$, one has VSI \pp waves, cf. section~\ref{subsubsec_VSI}), and \eqref{H2} is then identically satisfied.  Constrains on the functions $W_{\alpha}^{(0)}$ and $H^{(0)}$ follow from the remaining Einstein equations \eqref{H1}, \eqref{H0}, i.e.,
\beqn
 & & 2W^{(0)\ \ \ \B}_{[\a||\B]}={g_{\a\B,u}}^{||\B}-2(\ln\sqrt{\tilde g})_{,u\a} , \nonumber \\
 & & \T H^{(0)}=W^{(0)[\a||\B]}W^{(0)}_{[\a||\B]}+W^{(0)\ ||\a}_{\a,u}-(\ln\sqrt{\tilde g})_{,uu}+\frac{1}{4}g^{\a\B}_{\ \ ,u}g_{\a\B,u} -\b\F^2 . \label{H0_pp}
\eeqn

Similarly as for Ricci flat \pp waves (cf. Table~2 of \cite{Ortaggio09} and Proposition 7.3 of \cite{OrtPraPra13rev}), here the Weyl type is II'(abd) (which reduces to III(a) for $n=5$ and to N for $n=4$) or more special. The Maxwell field is given by \eqref{F_N_coords}. 

Some CSI (non-VSI) \pp waves coupled to null forms in Ricci-flat direct products arise as special cases of the solutions of \cite{DerGur86} (where $n=11$ and $p=4$).

\subsection{Universal Einstein-Maxwell solutions}

\label{subsec_univ_EM}

In section~\ref{subsec_univ_test} we commented on the role of a subset of the VSI Maxwell fields as {universal} solutions of {all} generalized electrodynamics on certain backgrounds (test fields). More generally, some of those can also be used to construct exact solutions of  full general relativity, i.e., keeping into account the backreaction of the electromagnetic field on the spacetime geometry. This is described by Einstein's equations in which, however, the $T_{ab}$ associated with the electromagnetic field is determined in the generalized electrodynamics (in terms of $\bF$ and its covariant derivatives, thus being generically different from \eqref{Energy momentum} {-- cf., e.g., \cite{Podolsky42}}). The class of universal Einstein-(generalized-)Maxwell solutions deserves a more detailed study, which we will present elsewhere. Here we only point out some examples. Namely, it can be shown that all VSI spacetimes with $L_{i1}=0=L_{1i}$ (i.e., the recurrent ones) coupled to an aligned VSI $p$-form field that solve the standard Einstein-Maxwell equations (and are thus of Ricci type N) are also exact solutions of gravity coupled to generalized electrodynamics,\footnote{To be precise, we should exclude from the discussion possible peculiar theories admitting an energy-momentum tensor $T_{ab}$ that vanishes for certain non-zero electromagnetic fields (or at least for the ``universal'' ones). One possible way to ensure this is, for example, to consider only theories for which $T_{ab}$ is of the form $T_{ab}\propto T_{ab}^{EM}+\mbox{``corrections''}$, where $T_{ab}^{EM}$ is the energy-momentum tensor of the standard Einstein-Maxwell theory, and the ``corrections'' are terms that go to zero faster that $T_{ab}^{EM}$ in the limit of weak fields. Additionally, it is also understood that a constant rescaling of a universal $p$-form $\bF$ (or, alternatively, of the corresponding metric, and so also the Ricci tensor) may be necessary when going from one theory to another.\label{foot_UEM}} provided $p>1$ and $\delta_iF_{1j_1\ldots j_{p-1}}=0$ (in an ``adapted'' parallely transported frame, i.e., such that $\M{i}{j}{k}=0$). Within this family, metrics of Weyl type N are necessarily \pp waves, for which such a universal property {was pointed out in} \cite{Guven87,HorSte90,Horowitz90}, at least for certain values of $p$ (\cite{Guven87} considered only plane waves, but included also Yang-Mills field). But metrics of Weyl type III are also permitted, including \pp waves ($L_{11}=0$) and also genuinely recurrent ($L_{11}\neq0$) spacetimes (for $n=4$, $p=3$ this was discussed in \cite{Coley02}). One {explicit} example of the latter solutions in 4D is given by\footnote{This solution has been obtained by adding a null $\bF$ to a type III vacuum spacetime found by Petrov (eq.~(31.40) in \cite{Stephanibook}), {cf. section~ \ref{sec_GE}}. A parallelly transported frame satisfying $\M{i}{j}{k}=0$ is given by $\bl=\pa_r$, $\bn=\pa_u-\frac{1}{2}\left(xr-xe^x-2\b e^xc^2(u)\right)\pa_r$, $\bbm_{2}=e^{-x/2}\left(\cos\frac{ye^u}{2}\pa_x-e^{-u}\sin\frac{ye^u}{2}\pa_y\right)$, $\bbm_{3}=e^{-x/2}\left(\sin\frac{ye^u}{2}\pa_x+e^{-u}\cos\frac{ye^u}{2}\pa_y\right)$. {Note that by setting $\b=0$ in \eqref{Petrov_EM}, this solution represents a universal {\em test} Maxwell field in a Petrov type III vacuum spacetime, relevant to the discussion in section~\ref{subsec_univ_test}.}} 
\beqn
 & & \d s^2 =2\d u\left[\d r+\frac{1}{2}\left(xr-xe^x-2\b e^xc^2(u)\right)\d u\right]+ e^x(\d x^2+e^{2u}\d y^2) , \label{Petrov_EM} \\
 & & \bF=e^{x/2}c(u)\d u\wedge\left(-\cos\frac{ye^u}{2}\d x+e^u\sin\frac{ye^u}{2}\d y\right) .
\eeqn

Similarly as in section~\ref{subsec_univ_test}, the above discussion applies to generalized electrodynamics with arbitrary higher-order derivative corrections. As a special case, the fact that Einstein-Maxwell solutions with aligned null electromagnetic fields (not necessarily VSI) are also solution of NLE coupled to gravity was previously demonstrated in \cite{Kichenassamy59,KreKic60,Peres61}.

\section*{Acknowledgments}
%\acknowledgments

This work has been supported by research plan {RVO: 67985840} and research grant GA\v CR 13-10042S.

\renewcommand{\thesection}{\Alph{section}}
\setcounter{section}{0}

\renewcommand{\theequation}{{\thesection}\arabic{equation}}

\section{Some of the Newman-Penrose equations for Kundt spacetimes}
\setcounter{equation}{0}

\label{app_D_Kundt}

\subsection{General Kundt spacetimes}

\label{app_Kundt_gen}

By assumption $\bl$ is geodesic and Kundt (i.e., expansionfree, shearfree and twistfree), so that {(recall the definitions \eqref{Ricci_rot})} \cite{Pravdaetal04,Coleyetal04vsi,OrtPraPra07}
\be
 L_{i0}=0, \qquad L_{ij}=0  . 
 \label{Kundt}
\ee
Without loss of generality we can use {\em an affine parametrization and a frame parallelly transported along $\bl$}, such that, in addition to \eqref{Kundt}, we also have \cite{Pravdaetal04,Coleyetal04vsi,OrtPraPra07}
\be
 L_{10}=0, \qquad \M{i}{j}{0}=0, \qquad N_{i0}=0 . 
 \label{Kundt_rot}
\ee

Thanks to these, the covariant derivatives of the frame vectors take the form \cite{Pravdaetal04,Coleyetal04vsi}
\beqn
& & \ell_{a ; b }=L_{11} \ell_a \ell_b +L_{1i} \ell_a \msub{i}{b}+L_{i1} \msub{i}{a} \ell_b  \label{dl} , \\
& & n_{a ; b }=-L_{11} n_a \ell_b -L_{1i} n_a \msub{i}{b}+N_{i1} \msub{i}{a} \ell_b  + N_{ij} \msub{i}{a} \msub{j}{b} \label{dn} , \\
& & \msub{i}{a ; b }=-{N}_{i1} \ell_a \ell_b-{L}_{i1} n_a \ell_b-{N}_{ij} \ell_a \msub{j}{b}+\M{i}{j}{1} \msub{j}{a} \ell_b  + \M{i}{k}{l} \msub{k}{a} \msub{l}{b} . \label{dm} 
\eeqn

From the Ricci identities (11g) and (11k) of \cite{OrtPraPra07} with \eqref{Kundt}, \eqref{Kundt_rot} it follows immediately 
\be
 R_{0i0j}=0 , \qquad R_{0ijk}=0 ,
\label{RiemIb}
\ee
which implies that all Kundt spacetimes are of aligned Riemann type I, with the further restriction $R_{0ijk}=0$ (cf.~\cite{PodZof09,Coleyetal09}).
Furthermore, the Ricci identities (11b), (11e), (11n), (11a), (11j), (11m) and (11f) of \cite{OrtPraPra07} read
\beqn
 & & DL_{1i}=-R_{010i} , \qquad DL_{i1}=-R_{010i} , \label{11be} \\
 & & D\M{i}{j}{k}=0 , \label{11n} \\
 & & DL_{11}=-L_{1i} L_{i1}-R_{0101}, \label{11a} \\
 & & DN_{ij}=-R_{0j1i} \label{11j} , \\
 & &  D\M{i}{j}{1}=-\M{i}{j}{k}L_{k1}-R_{01ij} , \label{11m} \\
 & &  DN_{i1}=-N_{ij}L_{j1}+R_{101i} \label{11f} , 
\eeqn
while the commutators \cite{Coleyetal04vsi} needed in this paper simplify to
\beqn
& & \T D - D \T = L_{11} D + L_{i1} \delta_i ,  \label{TD} \\
& & \delta_i D - D \delta_i=L_{1i} D . \label{dD} 
\eeqn

In adapted coordinates, the general Kundt line-element can be written as \eqref{Kundt_gen}, where  $\bl=\partial_r$ and all the metric functions depend arbitrarily on their arguments. In those coordinates $L_{1i}=L_{i1}$ (since $\ell_a=(\d u)_a$), cf. \eqref{l_deriv}. From~\eqref{l_deriv}, \eqref{dl} it follows that $\bl$ is {\em recurrent} iff $W_{\alpha}^{(1)}=0\Leftrightarrow L_{1i}=L_{i1}=0$, which is equivalent to $[\delta_i,D]=0$ -- this condition can be used to invariantly characterize subfamilies of Kundt spacetimes.

\subsection{Kundt spacetimes of aligned Riemann type II}

\label{app_Kundt_II}

The above results hold for {\em any} Kundt spacetime. If one now restricts to the Kundt spacetimes of {\em aligned Riemann type II} (i.e., we assume $R_{010i}=0$ in addition to \eqref{RiemIb}),\footnote{This is equivalent to saying that $\bl$ is a multiply aligned null direction of both the Weyl and the Ricci tensors.} using \eqref{Kundt}, \eqref{Kundt_rot} the Bianchi identities (B3), (B5), (B12), (B1), (B6) and (B4) of \cite{Pravdaetal04} reduce to 
\beqn
 & & DR_{01ij}=0 , \label{B3} \\
 & & DR_{0i1j}=0  , \label{B5} \\
 & & DR_{ijkl}=0 , \label{B12} \\ 
 & & D R_{101i}-\delta_i R_{0101}=-R_{0101} L_{i1}-R_{01is} L_{s1}-R_{0i1s} L_{s1}  \label{B1} , \\
 & & D R_{1kij}+\delta_k R_{01ij}=R_{01ij}L_{k1}-2R_{0k1[i} L_{j]1}+R_{ksij} L_{s1}-2R_{01[i|s} \M{s}{|j]}{k} , \label{B6} \\
 & & D R_{1i1j}-\T R_{0j1i}-\delta_j R_{101i}=R_{0101} N_{ij}-R_{01is} N_{sj}+R_{0s1i} N_{sj}+R_{0j1s} \M{s}{i}{1}+R_{0s1i} \M{s}{j}{1}  \nonumber \\
 & &		\qquad\qquad\qquad\qquad\qquad\qquad\qquad		 {}+2R_{101i} L_{[1j]}+R_{1ijs} L_{s1}+R_{101s} \M{s}{i}{j} .   \label{B4}  
\eeqn

Note that here \eqref{11be} reduces to $DL_{1i}=0=DL_{i1}$, so that differentiation of \eqref{11a} gives
\be
	 D^2L_{11}=-DR_{0101} . \label{D11a} 
\ee

An important subset of Kundt spacetimes of Riemann type II is given by the {\em degenerate} Kundt metrics \cite{ColHerPel09a,Coleyetal09}, defined by
\begin{definition}[Degenerate Kundt metrics \cite{ColHerPel09a,Coleyetal09}] 
	A Kundt spacetime is ``degenerate'' if the Kundt null direction $\bl$ is also a multiple null direction of the Riemann tensor and of its covariant derivatives of arbitrary order (which are thus all of aligned type~II, or more special). 
	\label{def_deg}
\end{definition}
It is worth emphasizing that, in fact, the degenerate condition is automatically met at all orders once it is satisfied by the Riemann tensor and its {\em first} derivative (Theorem~4.2 and section~7 of \cite{Coleyetal09}). 
For degenerate Kundt spacetimes we have the following
\begin{proposition}[Conditions for degenerate Kundt metrics]
\label{prop_Kundt_deg}
A Kundt spacetime is degenerate iff it is of aligned Riemann type II and $\ell^a R_{,a}=0$ (using an affine parameter and a parallely transported frame, the latter condition is equivalent to any of the following: {$DR=0$, $DR_{01}=0$ {or (for $n>3$)} $DC_{0101}=0$\footnote{Recall that for $n=3$ the Weyl tensor vanishes identically and therefore the condition $DC_{0101}=0$ is trivial.}).} A Kundt spacetime for which the tracefree part of the Ricci tensor is of aligned type III is necessarily degenerate.
\end{proposition}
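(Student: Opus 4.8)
The plan is to prove Proposition~\ref{prop_Kundt_deg} in two parts, following directly from the structure of the Newman--Penrose and Bianchi equations already collected in Appendix~\ref{app_D_Kundt}. For the equivalence characterizing degenerate Kundt metrics, I would work in an affinely parametrized, parallelly transported frame, so that \eqref{Kundt}, \eqref{Kundt_rot} and the simplified Ricci/Bianchi identities all hold. By Definition~\ref{def_deg} and the key observation of \cite{Coleyetal09} (Theorem~4.2 and section~7 there), it suffices to show that the Riemann tensor \emph{and its first derivative} are of aligned type~II; equivalently (given that the Kundt condition already forces aligned Riemann type~I with $R_{0ijk}=0=R_{0i0j}$, cf.~\eqref{RiemIb}), that $R_{010i}=0$ \emph{and} the b.w.~$+1$ components of $\nabla\mathrm{Riem}$ vanish. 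The first half of the statement is thus equivalent to: ``aligned Riemann type~II $+$ $DR=0$'' implies the b.w.~$+1$ part of $\nabla\mathrm{Riem}$ vanishes, and conversely. The converse is immediate since $R_{,a}=g^{bc}(\nabla_a\mathrm{Riem})_{\ldots}$-type contractions are automatically of the appropriate boost order once $\nabla\mathrm{Riem}$ is aligned type~II.

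For the forward direction, the idea is to run the argument of \cite{Coleyetal09} in the present notation. Assuming aligned Riemann type~II, the only potentially non-vanishing b.w.~$+1$ components of $\nabla\mathrm{Riem}$ are schematically $DR_{101i}$, $DR_{01ij}$ (and their symmetrizations), $\delta_i R_{0101}$, $D R_{1kij}$-type terms, etc.; these are precisely the left-hand sides appearing in \eqref{B1}, \eqref{B6} and the $D$-derivatives \eqref{B3}, \eqref{B5}, \eqref{B12}. The Bianchi identities \eqref{B3}, \eqref{B5}, \eqref{B12} give $DR_{01ij}=DR_{0i1j}=DR_{ijkl}=0$ outright. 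Then \eqref{B1} and \eqref{B6} express $DR_{101i}$ and $DR_{1kij}$ in terms of $\delta_i R_{0101}$, $\delta_k R_{01ij}$ and b.w.~$0$ curvature times Ricci rotation coefficients — all of which have boost weight $0$. Hence the real content reduces to showing $DR_{0101}=0$ and $\delta_i R_{0101}=0$ implies everything else is consistent; one then closes the loop by noting $DR_{0101}$ is, up to b.w.-$0$ terms, the relevant b.w.~$+1$ component and is controlled by $DR$ via the trace structure (this is where $DR=0\Leftrightarrow DR_{01}=0\Leftrightarrow DC_{0101}=0$ enters — one uses $R=2(R_{01}-\tfrac{1}{2}(n-2)\cdots)$-type algebraic relations between $R$, $R_{01}$, $C_{0101}$ valid for a type~II-aligned curvature, the Weyl case requiring $n>3$ since $C\equiv0$ in three dimensions). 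I would spell out these algebraic relations and invoke \eqref{11a}, \eqref{D11a} to tie $DR_{0101}$ to $D^2 L_{11}$, confirming the characterization.

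For the last sentence of the proposition — ``a Kundt spacetime for which the tracefree part of the Ricci tensor is of aligned type~III is necessarily degenerate'' — the plan is to use the just-established criterion. If $S_{ab}\equiv R_{ab}-\tfrac{R}{n}g_{ab}$ is of aligned type~III, then its b.w.~$0$ components $S_{01}$, $S_{ij}$ vanish, hence $R_{01}=\tfrac{R}{n}$, $R_{ij}=\tfrac{R}{n}\delta_{ij}$ and $R_{0i}=0$; in particular $DR_{01}=\tfrac{1}{n}DR$. Since the Kundt spacetime is automatically of aligned Riemann type~I, and the Weyl part inherits type~II once the Ricci part is of type~III (so that $R_{010i}=C_{010i}+(\text{Ricci terms of b.w.}\ge0)=C_{010i}$, with $C_{010i}$ part of the b.w.~$+1$ Weyl components which however are \emph{not} yet assumed to vanish), I need an extra step: the Bianchi identities must be used to kill $R_{010i}$. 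Here the contracted Bianchi identity $\nabla^b R_{ab}=\tfrac12\nabla_a R$ projected along suitable frame directions, combined with the type~III form of $S_{ab}$ and the Kundt relations, should force $R_{010i}=0$ (this essentially reproduces part of Proposition~7.1 of \cite{OrtPraPra13rev}); then $DR=0$ follows because the surviving Bianchi component relates $DR$ to already-vanishing quantities, or one simply invokes the known fact (Proposition~7.3 of \cite{OrtPraPra13rev}) that such spacetimes have $R=\text{const}$ when coupled to aligned pure radiation, hence $DR=0$. The main obstacle I anticipate is precisely this last point: carefully disentangling which b.w.~$+1$ Weyl components are automatically controlled by the Bianchi identities once only the \emph{Ricci} part is constrained, versus which require a genuinely separate argument — in other words, verifying that ``type~III traceless Ricci'' really does drag the Weyl tensor into aligned type~II and enforce $\ell^aR_{,a}=0$, rather than merely being compatible with it. I would handle this by explicitly examining the b.w.~$+1$ projections of the once-contracted Bianchi identity in the parallelly transported frame, and cross-checking against Propositions~7.1 and 7.3 of \cite{OrtPraPra13rev}.
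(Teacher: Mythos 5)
Your proposal follows essentially the same route as the paper: both parts rest on the \cite{ColHerPel09a,Coleyetal09} characterization (degenerate $\Leftrightarrow$ aligned Riemann type II plus $DR_{0101}=0$), the Bianchi identities \eqref{B3}, \eqref{B5}, \eqref{B12} to trade $DR_{0101}$ for $DR$, $DR_{01}$ or $DC_{0101}$, and, for the last sentence, the Kundt result that traceless Ricci of aligned type III forces Weyl type II (Proposition~2 of \cite{OrtPraPra07}, i.e.\ Proposition~7.1 of \cite{OrtPraPra13rev} -- this is a cited fact in the paper too, so your hesitation there is resolved the same way) together with the b.w.~$+1$ component of the contracted Bianchi identity giving $DR=0$. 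One bookkeeping slip to fix: $\delta_i R_{0101}$, $DR_{101i}$ and $DR_{1kij}$ all have boost weight $0$ (not $+1$), so \eqref{B1} and \eqref{B6} play no role and, in particular, $\delta_i R_{0101}=0$ must \emph{not} be imposed -- it fails for generic degenerate Kundt metrics and, if required, would break the equivalence with $\ell^a R_{,a}=0$; the only b.w.~$+1$ components of $\nabla\mathrm{Riem}$ surviving the type~II assumption are $DR_{0101}$, $DR_{01ij}$, $DR_{0i1j}$ and $DR_{ijkl}$, of which the last three vanish identically by \eqref{B3}, \eqref{B5}, \eqref{B12}.
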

\begin{proof} 
It is a result of \cite{ColHerPel09a,Coleyetal09} that for Kundt spacetimes the degenerate condition is equivalent to the Riemann type II with $DR_{0101}=0$. The first part of the proposition thus simply follows from $DR_{0101}=0$ and the contracted Bianchi identities. The second part follows using Proposition~2 of \cite{OrtPraPra07} (implying the Weyl type II -- cf. also Proposition~7.1 of \cite{OrtPraPra13rev}) and, again, the contracted Bianchi identities (whose component of b.w. $+1$ gives $DR=0$).  
\end{proof} 
For Kundt spacetimes of Riemann type II, the degenerate condition is also equivalent to $D^2L_{11}=0$ (cf.~\eqref{D11a}). Note that the assumptions on the Ricci tensor in the second part of Proposition~\ref{prop_Kundt_deg} are of physical interest since they correspond to the case when the energy-momentum tensor is triply aligned with $\bl$. See Remark~\ref{rem_deg} for further comments.

An alternative covariant characterization of degenerate Kundt metrics was given in Proposition~6.1 of \cite{Coleyetal09} for $n=4$. This result in fact holds for any $n$ and we reproduce it here, along with the sketch of a proof different from the one of \cite{Coleyetal09} (i.e., not using the explicit form of the Kundt metric in adapted coordinates). After defining the symmetric 2-tensor
\be
	Q^{ab}\equiv R^{acbd}\pounds_{\bl}\pounds_{\bl}g_{cd} ,
\ee
we can state:
\begin{proposition}[Covariant characterization of degenerate Kundt metrics \cite{Coleyetal09}]
A Kundt spacetime is: 
	\begin{enumerate}[(i)]
		\item of aligned Riemann type II (or more special) iff $Q^{ab}Q_{ab}=0$ \label{cov_i}
		\item degenerate iff $Q^{ab}Q_{ab}=0$ and $\pounds_{\bl}\pounds_{\bl}\pounds_{\bl}g_{ab}=0$. \label{cov_ii}
	\end{enumerate}		
\end{proposition}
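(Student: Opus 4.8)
`The plan is to reduce both equivalences to the already-established fact (from \cite{ColHerPel09a,Coleyetal09}) that a Kundt metric is of aligned Riemann type~II iff $DR_{0101}=0$ is \emph{not} imposed, i.e.\ type~II is equivalent to $R_{0i0j}=R_{0ijk}=R_{010i}=0$, and degenerate iff additionally $DR_{0101}=0$. So the whole content of the statement is to translate the two tensorial conditions $Q^{ab}Q_{ab}=0$ and $\pounds_{\bl}\pounds_{\bl}\pounds_{\bl}g_{ab}=0$ into statements about Riemann components in a parallelly transported Kundt frame.`

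The plan is to reduce both equivalences to the already–established fact (Prop.~\ref{prop_Kundt_deg}, from \cite{ColHerPel09a,Coleyetal09}) that a Kundt spacetime is of aligned Riemann type~II iff $R_{010i}=0$ --- recall $R_{0i0j}=R_{0ijk}=0$ hold automatically by \eqref{RiemIb} --- and degenerate iff additionally $DR_{0101}=0$. The whole content is thus to translate the tensorial conditions $Q^{ab}Q_{ab}=0$ and $\pounds_{\bl}\pounds_{\bl}\pounds_{\bl}g_{ab}=0$ into statements about curvature components, working throughout in a null frame parallelly transported and affinely parametrized along $\bl$ (so that \eqref{Kundt}, \eqref{Kundt_rot}, and hence \eqref{dl}, \eqref{11be}, \eqref{11a}, are available).

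The computational core is to iterate the Lie derivative of the metric along $\bl$. From $\pounds_{\bl}g_{ab}=2\ell_{(a;b)}$ and \eqref{dl}, together with the elementary identities $\pounds_{\bl}\ell_a=0$ (immediate, since $\bl$ is null and affinely geodesic) and $\pounds_{\bl}m^{(i)}_a=L_{i1}\ell_a$ (which follows from \eqref{dl} and the parallel transport of the frame), one finds after feeding in \eqref{11be} and \eqref{11a}
\be
 \pounds_{\bl}\pounds_{\bl}g_{ab}=A\,\ell_a\ell_b+B_i\left(\ell_am^{(i)}_b+m^{(i)}_a\ell_b\right) , \qquad A=2L_{i1}L_{i1}-2R_{0101} , \quad B_i=-2R_{010i} ,
\ee
and then
\be
 \pounds_{\bl}\pounds_{\bl}\pounds_{\bl}g_{ab}=\left(DA+2B_iL_{i1}\right)\ell_a\ell_b+\left(DB_i\right)\left(\ell_am^{(i)}_b+m^{(i)}_a\ell_b\right) .
\ee
The structural point is that $\pounds_{\bl}\pounds_{\bl}g_{ab}$ is of aligned type~III with respect to $\bl$ (no boost-weight~$\ge0$ part, and no $m^{(i)}m^{(j)}$ part), which is what makes the subsequent contraction manageable.

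I would then substitute into $Q^{ab}=R^{acbd}\pounds_{\bl}\pounds_{\bl}g_{cd}$. With the Kundt identities \eqref{RiemIb} in hand, the relevant frame components of $R_{acbd}\ell^c\ell^d$ and $R_{acbd}\ell^cm^{(i)d}$ are short to list, and one obtains $Q_{00}=Q_{0i}=Q_{ij}=0$ while $Q_{01}=-B_iR_{010i}=2R_{010i}R_{010i}$. The Lorentzian contraction therefore collapses to $Q^{ab}Q_{ab}=2\left(Q_{01}\right)^2=8\left(R_{010i}R_{010i}\right)^2$, a perfect square of the positive-definite $\sum_iR_{010i}^2$; hence $Q^{ab}Q_{ab}=0\Leftrightarrow R_{010i}=0\Leftrightarrow$ aligned Riemann type~II, which is part~(i). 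For part~(ii): adding the assumption $Q^{ab}Q_{ab}=0$ forces $R_{010i}\equiv0$, so $B_i\equiv DB_i\equiv0$ and, by \eqref{11be}, $DL_{i1}=0$; the expression for $\pounds_{\bl}\pounds_{\bl}\pounds_{\bl}g_{ab}$ then reduces to $\left(DA\right)\ell_a\ell_b$ with $DA=-2DR_{0101}$, so $\pounds_{\bl}\pounds_{\bl}\pounds_{\bl}g_{ab}=0\Leftrightarrow DR_{0101}=0$, and the equivalence with ``degenerate'' follows from Prop.~\ref{prop_Kundt_deg}; the converse is read off the same formulas.

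I expect the difficulty to be bookkeeping rather than conceptual: getting the iterated Lie derivatives right --- in particular keeping the $\pounds_{\bl}m^{(i)}_a=L_{i1}\ell_a$ contribution, which is what produces the $2B_iL_{i1}$ and $2L_{i1}L_{i1}$ terms --- and carefully tracking index positions and Riemann antisymmetries when assembling $Q_{ab}$. The one genuine subtlety worth flagging is that the contraction a priori yields only $Q^{ab}Q_{ab}\propto\left(R_{010i}B_i\right)^2$; it is the exact relation $B_i=-2R_{010i}$, itself a consequence of $DL_{1i}=-R_{010i}$, that upgrades this to $\left(R_{010i}R_{010i}\right)^2$ and so makes its vanishing equivalent to $R_{010i}=0$ rather than merely to $R_{010i}B_i=0$. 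Finally, since $Q^{ab}Q_{ab}$ and $\pounds_{\bl}\pounds_{\bl}\pounds_{\bl}g_{ab}$ only rescale by a constant under a constant rescaling of $\bl$, the choice of affine normalization of $\bl$ is immaterial.
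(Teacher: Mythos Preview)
Your argument is correct and follows essentially the same route as the paper's proof: compute $\pounds_{\bl}g_{ab}$ from \eqref{dl}, iterate the Lie derivative using \eqref{dm} and \eqref{11be}, and reduce the two tensorial conditions to $R_{010i}=0$ and $DR_{0101}=0$ respectively. The paper phrases the second step as $\pounds_{\bl}\pounds_{\bl}\pounds_{\bl}g_{ab}=0\Leftrightarrow D^2L_{11}=0$ and then invokes \eqref{D11a}, whereas you substitute \eqref{11a} one step earlier and land directly on $DR_{0101}$; these are equivalent. Your explicit identification $B_i=-2R_{010i}$ and the resulting $Q^{ab}Q_{ab}=8(R_{010i}R_{010i})^2$ spell out precisely the step the paper summarizes as ``it is easy to see that $Q^{ab}Q_{ab}=0\Leftrightarrow R_{010i}=0$''.
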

\begin{proof} 
In a Kundt spacetime, using an affine parameter from \eqref{dl} one obtains
\be
	\pounds_{\bl}g_{ab}= 2 L_{11} \ell_a \ell_b +(L_{1i}+L_{i1})(\ell_a \msub{i}{b}+\msub{i}{a} \ell_b) . \label{L_g}
\ee	
Taking the Lie derivative of this expression and using \eqref{dm} and \eqref{11be}, it is easy to see that $Q^{ab}Q_{ab}=0\Leftrightarrow R_{010i}=0$, which (recalling \eqref{RiemIb}) proves \eqref{cov_i}.

When $Q^{ab}Q_{ab}=0$, using a parallelly transported frame one easily finds that $\pounds_{\bl}\pounds_{\bl}\pounds_{\bl}g_{ab}=0\Leftrightarrow D^2L_{11}=0$, which (recalling \eqref{D11a}) proves \eqref{cov_ii}.

\end{proof}

\section{Proof of Theorem~\ref{theor}}
\setcounter{equation}{0}

\label{app_proof}

\subsection{Proof of ``\ref{cond2}. $\Rightarrow$ \ref{cond1}.''}

\label{sec_kundt->vsi}

{\subsubsection{Preliminaries}}

Before starting with the proof, let us make a few helpful observations on the strategy we shall adopt. First, by assumption~\ref{cond2a}, $\bF$ is VSI$_0$. If we are able to show that all its covariant derivatives are of aligned type III (or more special), then we are done with the proof.

Now, by assumption~\ref{cond2c}, $\bl$ is Kundt. Using an affine parameter and a frame parallely transported along $\bl$, this implies that the covariant derivatives of the frame vectors do not produce terms of higher b.w., cf. \eqref{dl}--\eqref{dm} (i.e., $\ell_{a;b}$ has only components of b.w. $-1$ or less, etc.). Together with assumption~\ref{cond2b} (which here can be written as the first of \eqref{DF_DR}), this immediately shows that $\nabla\bF$ is of aligned type III, as required   (cf. also Proposition~\ref{proposition_VSI1_2} in Appendix~\ref{app_VSI_1_2}). The problem is now to show that the same is true for covariant derivatives of $\bF$ of {\em arbitrary} order.

To this end, the {\em balanced-scalar approach} of \cite{Pravdaetal02,Coleyetal04vsi} will be useful. 
This approach can be applied to various tensors (or spinors). In the context of VSI tensors, the main idea is to show that (under proper assumptions) the covariant derivative of a tensor of type III is necessarily of aligned type III and thus, {\em by induction}, the tensor under consideration is VSI. Let us thus recall the relevant definition of \cite{Pravdaetal02,Coleyetal04vsi}:

\begin{definition}[Balanced scalars and tensors \cite{Pravdaetal02,Coleyetal04vsi}]
\label{def_balanced}
	In a frame parallely transported along an affinely parameterized geodesic null vector field $\bl$, a scalar $\eta$ of b.w. $b$ under a constant boost is a ``balanced scalar'' if $D^{-b}\eta=0$ for $b<0$ and $\eta = 0$ for $b\geq0$. A tensor whose components are all balanced scalars is a ``balanced tensor''.
\end{definition}

Note, in particular, that balanced tensors are of type III (or more special), multiply aligned with $\bl$. Restating the inductive method mentioned above in more technical terms, this will thus consist in showing that the covariant derivative of a balanced tensor is again a balanced tensor (Lemma~\ref{lemma_deriv} below, which will then apply to $\bF$).\footnote{
In the balanced-scalar approach, it is convenient to assign a b.w. to all the Newman-Penrose quantities, and this is why we consider only {\em constant} boosts in the Definition~\ref{def_balanced} (so that, e.g., $L_{11}$ has b.w. $-1$ and, if $\eta$ has b.w. $b$, then $D\eta$ and $\T\eta$ have, respectively, b.w. $(b+1)$ and $(b-1)$, etc. -- these quantities would {\em not} admit a b.w. under a general boost \cite{Durkeeetal10}). It is important to observe that there is no loss of generality here as far as our proof is concerned -- cf. also \cite{Coleyetal04vsi,HerPraPra14,Herviketal15}.}

\subsubsection{Proof}

We are thus ready to prove the direction ``\ref{cond2}. $\Rightarrow$ \ref{cond1}.'' of Theorem~\ref{theor}.

By assumption \ref{cond2c}, $\bl$ is degenerate Kundt (and thus geodesic). Eq.~\eqref{Kundt} is satisfied and, employing an affine parameter and a parallelly transported frame, also \eqref{Kundt_rot} and \eqref{RiemIb} hold, along with $R_{010i}=0$ and $DR_{0101}=0$.
Using the Ricci and Bianchi identities and the commutators summarized in appendix~\ref{app_D_Kundt}, one easily arrives at 
\beqn
 & & DL_{1i}=0 , \qquad DL_{i1}=0 , \qquad D\M{i}{j}{k}=0 , \label{Ricci_D_0} \\
 & & D^2N_{ij}=0 , \qquad D^2\M{i}{j}{1}=0 , \qquad D^2L_{11}=0 , \qquad  D^3N_{i1}=0 . \label{Ricci_D_-1-2}
\eeqn 
Together with the commutators \eqref{TD} and \eqref{dD}, this suffices to readily extend Lemma~4 of \cite{Coleyetal04vsi} 
(see \cite{Coleyetal04vsi,Pravdaetal02} for more technical details), i.e., 
\begin{lemma}[Balanced scalars in degenerate Kundt spacetimes]
\label{lemma_balanced}
In a degenerate Kundt spacetime, employing an affine parameter and a parallelly transported frame, if $\eta$ is a balanced scalar of b.w. $b$, then all the following scalars (ordered by b.w.) are also balanced: $D\eta$; $L_{1i}\eta$, $L_{i1}\eta$, $\M{i}{j}{k}\eta$, $\delta_i\eta$; $L_{11}\eta$, $N_{ij}\eta$, $\M{i}{j}{1}\eta$, $\T \eta$, $N_{i1}\eta$.	
\end{lemma}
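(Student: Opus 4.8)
The plan is to prove Lemma~\ref{lemma_balanced} by a direct case analysis on b.w., using the inductive ``balanced scalar'' definition together with the derivative relations~\eqref{Ricci_D_0}, \eqref{Ricci_D_-1-2} and the commutators \eqref{TD}, \eqref{dD}. Recall that a scalar $\eta$ of b.w. $b$ is balanced iff $D^{-b}\eta=0$ for $b<0$ and $\eta=0$ for $b\geq 0$; so for each candidate scalar $\chi$ (of definite b.w. $b'$ once we assign the standard b.w.s to the Ricci rotation coefficients) I must check that $D^{-b'}\chi=0$ when $b'<0$, and that $\chi=0$ when $b'\geq 0$.

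First I would dispose of $D\eta$: it has b.w. $b+1$, and $D^{-(b+1)}(D\eta)=D^{-b}\eta$, which vanishes by hypothesis when $b<0$; if $b=-1$ then $D\eta$ has b.w. $0$ and indeed $D\eta=D^{-b}\eta=0$; if $b\geq 0$ then $\eta=0$ already. Next, the b.w.-$(b-1)$ scalars $L_{1i}\eta$, $L_{i1}\eta$, $\M{i}{j}{k}\eta$: since $L_{1i}$, $L_{i1}$, $\M{i}{j}{k}$ all have b.w. $0$ and are annihilated by $D$ (first line of~\eqref{Ricci_D_0}), one computes $D^{-(b-1)}(L_{1i}\eta)=L_{1i}\,D^{-(b-1)}\eta$, and $-(b-1)\geq -b$ when $b\leq 1$, but more carefully: for $b<0$ we have $D^{1-b}\eta = D(D^{-b}\eta)=D(0)=0$, so these are balanced; for $b=0$ they equal $L_{1i}\eta=0$; $b>0$ is vacuous. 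For $\delta_i\eta$ (b.w. $b-1$): here I use the commutator~\eqref{dD}, $\delta_i D-D\delta_i = L_{1i}D$, iterated, to move $D^{1-b}$ past $\delta_i$, picking up only terms containing $D^{-b}\eta$ or higher $D$-powers of $\eta$ (all zero by hypothesis or by the previous cases), so $D^{1-b}(\delta_i\eta)=0$; the $b\geq 0$ case is again trivial. The b.w.-$(b-2)$ scalars $L_{11}\eta$, $N_{ij}\eta$, $\M{i}{j}{1}\eta$, $N_{i1}\eta$ are handled by exactly the same pattern, now using the {\em second} covariant-$D$ relations~\eqref{Ricci_D_-1-2}: e.g. $D^{2}(L_{11}\eta)$ expands via Leibniz into terms with $D^2L_{11}=0$, with $DL_{11}\cdot D\eta$, and with $L_{11}D^2\eta$, and since we need $D^{2-b}(L_{11}\eta)$ with $b<0$ one checks each term vanishes because $D^{k}\eta=0$ for $k\geq -b$ (and $D^2L_{11}=0$, $D^3N_{i1}=0$ absorb the higher-derivative pieces). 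Finally $\T\eta$ (b.w. $b-1$): apply the commutator~\eqref{TD}, $\T D-D\T=L_{11}D+L_{i1}\delta_i$, to push $D^{1-b}$ through $\T$; the right-hand side contributes $L_{11}D^{2-b}\eta$-type and $L_{i1}\delta_i D^{1-b}\eta$-type terms, all of which vanish by the hypothesis on $\eta$ and by the $\delta_i$ case just proved.

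The main obstacle is purely bookkeeping: in the $\delta_i\eta$ and $\T\eta$ cases one must iterate the commutators~\eqref{dD} and~\eqref{TD} exactly $|b|$ times and verify that every term generated involves either a sufficiently high power of $D$ acting on $\eta$ (hence zero by the balanced hypothesis) or a coefficient killed by one of~\eqref{Ricci_D_0}--\eqref{Ricci_D_-1-2}; there is no conceptual difficulty, but the induction on $|b|$ must be set up carefully so that the $\delta_i$ result is available before treating $\T$, and so that the degenerate-Kundt conditions $DR_{010i}=0$ (equivalently $DL_{1i}=DL_{i1}=0$) and $DR_{0101}=0$ (equivalently $D^2L_{11}=0$) are genuinely used — these are precisely what upgrades Lemma~4 of~\cite{Coleyetal04vsi} (stated there for VSI spacetimes) to the present degenerate-Kundt setting. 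Since~\cite{Coleyetal04vsi,Pravdaetal02} already carry out an essentially identical argument, I would simply indicate the required modifications rather than reproduce the full iteration, exactly as the authors do.
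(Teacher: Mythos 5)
Your proposal follows essentially the same route as the paper: the paper's proof likewise consists of deriving the $D$-derivative relations \eqref{Ricci_D_0}, \eqref{Ricci_D_-1-2} from the Ricci identities and the degenerate Kundt conditions, and then invoking the iteration of Lemma~4 of \cite{Coleyetal04vsi} with the commutators \eqref{TD}, \eqref{dD}, exactly as you do. One bookkeeping slip should be corrected, though: your boost weights for the product scalars are systematically off by one. Since $L_{1i}$, $L_{i1}$, $\M{i}{j}{k}$ have b.w.\ $0$, the scalars $L_{1i}\eta$, $L_{i1}\eta$, $\M{i}{j}{k}\eta$, $\delta_i\eta$ have b.w.\ $b$ (not $b-1$), and $L_{11}\eta$, $N_{ij}\eta$, $\M{i}{j}{1}\eta$, $\T\eta$ have b.w.\ $b-1$ (not $b-2$); only $N_{i1}\eta$ sits at $b-2$. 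Consequently the condition you verify, e.g.\ $D^{1-b}(L_{1i}\eta)=0$, is strictly weaker than the one balancedness actually requires, namely $D^{-b}(L_{1i}\eta)=0$; checking one power of $D$ too many does not establish the claim. The fix is immediate — $D^{-b}(L_{1i}\eta)=L_{1i}D^{-b}\eta=0$ by $DL_{1i}=0$ and the hypothesis on $\eta$, and similarly $D^{1-b}(L_{11}\eta)=0$ using $D^2L_{11}=0$, etc.\ — so the argument goes through once the exponents are matched to the correct boost weights (this also matters for the induction order: the $\delta_i\eta$ case must be settled at b.w.\ $b$ before it is used inside the $\T\eta$ case at b.w.\ $b-1$).
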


Now, considering \eqref{dl}--\eqref{dm} and Lemma~\ref{lemma_balanced}, one can easily extend also Lemma~6 of \cite{Coleyetal04vsi} (cf. also \cite{Pravdaetal02} for a spinorial version of in four dimensions), i.e.,  
\begin{lemma}[Derivatives of balanced tensors in degenerate Kundt spacetimes]
\label{lemma_deriv}
 In a degenerate Kundt spacetime, the covariant derivative of a balanced tensor is again a balanced tensor.
\end{lemma}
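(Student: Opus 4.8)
The plan is to prove Lemma~\ref{lemma_deriv} by a direct component-wise computation, leveraging the structure of the covariant derivative formulas~\eqref{dl}--\eqref{dm} for the frame vectors together with Lemma~\ref{lemma_balanced}. First I would write out $\nabla_c T_{a_1\ldots a_k}$ for a balanced tensor $T$, expanding it in the parallelly transported null frame. The key observation is that
\be
	\nabla_c T_{a_1\ldots a_k} = (\nabla_c T)(\text{frame components}) = (\delta_c\text{-type derivative of components}) - \sum_{s} T_{a_1\ldots}{}^{e}{}_{\ldots a_k}(\text{Ricci rotation coefficients}) ,
\ee
so that every term on the right-hand side is either a frame derivative $De$, $\T e$, $\delta_i e$ of a balanced-scalar component $e$ of $T$, or a product of a Ricci rotation coefficient with such a component. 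The point of the degenerate Kundt hypothesis is precisely that, because of \eqref{Kundt} and \eqref{Kundt_rot}, the connection coefficients appearing in \eqref{dl}--\eqref{dm} are only $L_{11}$, $L_{1i}$, $L_{i1}$, $N_{ij}$, $N_{i1}$, $\M{i}{j}{1}$, $\M{i}{j}{k}$ --- each of which has b.w. $\le 0$ --- and hence none of them can raise the boost weight of a component.

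Next I would invoke Lemma~\ref{lemma_balanced} term by term: it states that if $\eta$ is a balanced scalar of b.w. $b$, then $D\eta$, $\T\eta$, $\delta_i\eta$, $L_{11}\eta$, $L_{1i}\eta$, $L_{i1}\eta$, $N_{ij}\eta$, $N_{i1}\eta$, $\M{i}{j}{1}\eta$ and $\M{i}{j}{k}\eta$ are all balanced scalars (of the appropriate shifted b.w.). Since every component of $\nabla_c T$ is, by the expansion above, a sum of scalars of exactly these forms applied to the (balanced) components of $T$, it follows immediately that every component of $\nabla_c T$ is a balanced scalar. A balanced scalar of b.w. $b\ge 0$ vanishes, so $\nabla T$ has only negative b.w. components and is therefore of aligned type~III or more special; and the defining condition $D^{-b}\eta=0$ is preserved as required. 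Hence $\nabla T$ is again a balanced tensor.

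The main obstacle --- really the only nontrivial content --- is making sure that the b.w. bookkeeping in the preceding paragraph is honest: one has to check that the frame derivatives $D$, $\T$, $\delta_i$ acting on a component of definite b.w. again yield something with a well-defined b.w. (this is why only constant boosts are used, as noted in the footnote to Definition~\ref{def_balanced}), and that the commutator relations \eqref{TD}, \eqref{dD} are consistent with this, so that the inductive ``$D^{-b}\eta=0$'' clause propagates correctly through a $\T$ or $\delta_i$ derivative. All of this is exactly what is packaged into Lemma~\ref{lemma_balanced}, whose proof in turn rests on \eqref{Ricci_D_0}--\eqref{Ricci_D_-1-2} and the commutators; once Lemma~\ref{lemma_balanced} is in hand, Lemma~\ref{lemma_deriv} is essentially a transcription of Lemma~6 of \cite{Coleyetal04vsi} to the present $p$-form setting, with no new computation needed. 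I would therefore present the proof as: expand $\nabla T$ in the parallelly transported frame using \eqref{dl}--\eqref{dm}, observe that each resulting term is of one of the forms covered by Lemma~\ref{lemma_balanced}, and conclude. Finally, applying Lemma~\ref{lemma_deriv} inductively starting from $\bF$ --- which is balanced by conditions~\ref{cond2a} and \ref{cond2b} (the latter giving $DF_{1i_1\ldots i_{p-1}}=0$, i.e., the b.w.\ $-1$ components are balanced) --- shows that all covariant derivatives $\nabla^{(k)}\bF$ are of aligned type~III, so by Theorem~\ref{alg_theor} all scalar invariants of $\bF$ and its derivatives vanish, completing the direction ``\ref{cond2}.\ $\Rightarrow$ \ref{cond1}.''.
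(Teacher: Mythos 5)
Your proposal is correct and follows essentially the same route as the paper: expand $\nabla T$ in the parallelly transported frame using \eqref{dl}--\eqref{dm}, note that every resulting term is a frame derivative of, or a product of one of the surviving Ricci rotation coefficients with, a balanced component, and then apply Lemma~\ref{lemma_balanced} term by term (this is precisely the paper's extension of Lemma~6 of \cite{Coleyetal04vsi}). The concluding inductive application to $\bF$ also matches the paper's argument for ``\ref{cond2}.\ $\Rightarrow$ \ref{cond1}.''.
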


Next, by assumptions \ref{cond2a} and \ref{cond2b}, we also have that \eqref{F_N} and the first of \eqref{DF_DR} hold. But these two equations precisely mean that {\em $\bF$ is a balanced tensor}. By Lemma~\ref{lemma_deriv}, the covariant derivatives of arbitrary order of $\bF$ are thus balanced tensors. In particular, they all possess only components of negative boost weight, which implies that $\bF$ is VSI, as we wanted to prove.

\subsection{Proof of ``\ref{cond1}. $\Rightarrow$ \ref{cond2}.''}

\label{sec_vsi->kundt}

\subsubsection{Preliminaries}

\label{subsubsec_prelim}

Let us start by proving two useful lemmas (of some interest in their own) about general tensors with vanishing invariants.\footnote{To avoid confusion, let us emphasize that in the present section~\ref{subsubsec_prelim}, and only here, $\bT$ can be any tensor and does not necessarily coincide with the energy-momentum tensor~\eqref{Energy momentum}.\label{foot_T}} {The first of these (Lemma~\ref{lemma_VSI})} follows directly from Theorem~2.1 {and Corollary~3.2} of \cite{Hervik11} (since $\bT$ and its covariant derivatives up to order $I$ are clearly not characterized by their invariants). We nevertheless provide an independent simple proof.

\begin{lemma}[{Alignment of VSI tensors}]
\label{lemma_VSI}
  If a tensor field $\bT$ is VSI$_I$, $\bT$ and its covariant derivatives up to order $I$ are of {\em aligned} type III (or more special).
\end{lemma}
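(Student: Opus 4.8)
The plan is to prove the contrapositive-friendly statement by induction on $I$ using the fundamental fact from the null alignment formalism (already invoked above via Theorem~\ref{alg_theor} / Corollary~\ref{lemma_VSI0}) that a tensor is VSI$_0$ iff it is of type III or more special. For $I=0$ the claim is exactly Theorem~\ref{alg_theor} applied to $\bT$. For the inductive step, suppose $\bT$ is VSI$_I$ with $I\ge1$; then in particular $\bT$ is VSI$_{I-1}$, so by the inductive hypothesis $\bT,\nabla\bT,\dots,\nabla^{I-1}\bT$ are all of aligned type III (they share a common multiple aligned null direction $\bl$, since the ``aligned'' in the statement is with respect to the single direction singled out by $\bT$ itself). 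It remains to show $\nabla^{I}\bT$ is of type III. The key point is that $\nabla^{I}\bT$ is VSI$_0$: any scalar polynomial invariant of $\nabla^{I}\bT$ alone is in particular a scalar polynomial invariant built from $\bT$ and its covariant derivatives up to order $I$, hence vanishes because $\bT$ is VSI$_I$. Then Theorem~\ref{alg_theor} applied to the tensor $\nabla^{I}\bT$ gives that it is of type III or more special, i.e.\ it has only components of negative boost weight with respect to \emph{some} null direction.

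The one subtlety — and the step I expect to be the main obstacle — is the word \emph{aligned}: Theorem~\ref{alg_theor} only guarantees that each $\nabla^k\bT$ is type III with respect to its \emph{own} (a priori possibly different) multiple null direction, whereas the Lemma asserts they are all type III with respect to a \emph{common} $\bl$. To upgrade ``each is type III'' to ``all aligned'', I would argue as follows. Work in a frame adapted to the multiple null direction $\bl$ of $\bT$, so $\bT$ has only negative-boost-weight components. Suppose for contradiction that $\nabla^k\bT$ has a nonvanishing component of non-negative boost weight (for some minimal such $k\le I$) in this frame, say of boost weight $b\ge0$. One can then contract $\nabla^k\bT$ with suitable copies of the frame null vectors and of lower-order derivatives of $\bT$ (all of which have strictly negative boost weight in this frame, by minimality of $k$ and the inductive hypothesis) together with the metric to build a scalar polynomial invariant. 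Because $\bl$ is null and the frame is a null frame, one can always arrange the contraction so that the total boost weight is zero and the resulting scalar is a genuine polynomial invariant of $\{\bT,\dots,\nabla^I\bT\}$; a short combinatorial check (of the type carried out for the Weyl/Riemann tensor in \cite{Pravdaetal02,Coleyetal04vsi}) shows that such a nonzero boost-weight-$b$ component forces at least one such invariant to be nonzero, contradicting VSI$_I$. Hence all components of non-negative boost weight of $\nabla^k\bT$ vanish in the frame adapted to $\bl$, i.e.\ $\nabla^k\bT$ is of type III aligned with $\bl$.

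Putting the induction together yields the Lemma. I expect the write-up to be short: the inductive skeleton is immediate, and the only real content is the ``aligned'' contraction argument, which is a direct adaptation of the balanced-scalar / invariant-building techniques of \cite{Pravdaetal02,Coleyetal04vsi} (and is also implicit in \cite{Hervik11}, Theorem~2.1 and Corollary~3.2, which is why the authors note the Lemma ``follows directly'' from those results). An alternative, even slicker route — which I would mention as a remark rather than the main proof — is simply to apply Theorem~2.1 of \cite{Hervik11} directly to the collection $\{\bT,\nabla\bT,\dots,\nabla^I\bT\}$: since none of these tensors is determined by its scalar polynomial invariants (they are VSI$_I$ but nonzero), the characterization theorem of \cite{Hervik11} places each of them, with a common aligned null direction, into type III or more special, which is precisely the assertion.
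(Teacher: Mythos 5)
Your inductive skeleton and your identification of the crux (upgrading ``each $\nabla^k\bT$ is type III w.r.t.\ \emph{some} direction'' to ``all are type III w.r.t.\ the direction $\bl$ singled out by $\bT$'') match the paper. However, the argument you offer for that crux has a genuine gap, in two places. First, a contraction involving ``suitable copies of the frame null vectors'' is \emph{not} a scalar polynomial invariant of $\{\bT,\dots,\nabla^I\bT\}$: the frame vectors $\bl,\bn,\bm{i}$ are auxiliary structure, not built from the tensor, its derivatives and the metric, so its nonvanishing contradicts nothing. Second, even restricting to legitimate full contractions, the claim that a nonzero component of non-negative b.w.\ ``forces at least one such invariant to be nonzero'' by a ``short combinatorial check'' is precisely the hard direction of the algebraic VSI theorem (Theorem~\ref{alg_theor}): a full contraction is a \emph{sum} of products of components whose boost weights add to zero, and such sums can cancel even when individual non-negative-b.w.\ components are nonzero. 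You cannot both invoke Theorem~\ref{alg_theor} as a black box and re-derive its content by an unspecified combinatorial argument; as written, the alignment step is not established.

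The paper closes this gap with a cleaner device. Working in the frame adapted to the (unique) multiple null direction $\bl$ of the type III tensor $\bT$: if $\bT$ has only components of b.w.\ $\le-2$, then $\nabla\bT$ has components of b.w.\ $\le 0$ (the covariant derivative raises b.w.\ by at most $2$), i.e.\ $\nabla\bT$ is type II aligned with $\bl$; since a type III tensor cannot admit a \emph{distinct} null direction annihilating all positive-b.w.\ components, its type III direction must be $\bl$ itself. If instead $\bT$ has b.w.\ $-1$ components, one forms the tensor \emph{product} $\bT_\times=\bT\otimes\nabla\bT$ --- a legitimate polynomial construction whose components are individual products (no sums, hence no cancellations) and whose b.w.\ is bounded by $0$ w.r.t.\ $\bl$ (Proposition~A.11 of \cite{HerOrtWyl13}); applying Theorem~\ref{alg_theor} to $\bT_\times$ and the same uniqueness argument forces its b.w.-$0$ components $T_{(-1)}(\nabla T)_{(+1)}$ to vanish, hence $\nabla\bT$ has no b.w.\ $+1$ components, and one is back in the first case. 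Your ``slicker alternative'' --- applying Theorem~2.1 of \cite{Hervik11} directly to the collection $\{\bT,\nabla\bT,\dots,\nabla^I\bT\}$, which is not characterized by its invariants --- is indeed valid and is exactly the route the paper acknowledges before giving its independent proof; had you made that your main argument rather than a remark, the proof would stand.
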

\begin{proof} 
The case $I=0$ is contained in the algebraic VSI theorem (Theorem~\ref{alg_theor}), so we need to discuss only the case $I\ge 1$. Additionally, the lemma is trivially true in the case $\nabla\bT=0$, so that we can assume hereafter $\nabla\bT\neq0$. $\bT$ being VSI$_I$ {implies} that $\bT$ and its covariant derivatives up to order $I$ are all VSI$_0$ (Definition~\ref{def_VSI}). That all these tensors are of type III thus follows immediately from \cite{Hervik11}. It remains to be proven that they are all aligned (i.e., for all of them the same null vector $\bl$ defines a multiple null direction such that all the non-negative b.w. components vanish).

Since	$\bT$ is of type III, it admits a {\em unique} multiply aligned null direction such that all the non-negative b.w. components vanish. Furthermore, such tensor cannot admit a distinct null direction with respect to which {all} positive b.w. components vanish.  Let us work in a null frame such that $\bl$ is parallel to this {unique} null direction. In this frame, $\bT$ possesses only components of negative b.w.. Now, let us consider two possibile cases separately. $i)$ First, if $\bT$ has only components of b.w. $-2$ or less, then the components of $\nabla\bT$ have b.w. 0 or less (since the covariant derivative of a tensor can, at most, raise the b.w. of $+2$), i.e., also $\nabla\bT$ is multiply aligned to $\bl$. But $\nabla\bT$ must be of type III (as noticed above), therefore the only possibility is that the components of $\nabla\bT$ have, in fact, b.w. $-1$ (or less) in the frame we are using, so $\nabla\bT$ is of {\em aligned} type III. $ii)$ On the other hand, if $\bT$ has some components of b.w. $-1$, we cannot use the same argument, since some components of $\nabla\bT$ will have b.w. $+1$, in general. Let us thus assume this is indeed the case (if not, i.e., if components of $\nabla\bT$ have only b.w. $0$ or less, then we can proceed as in case $i)$) and let us consider, instead, the tensor product $\bT_\times\equiv\bT\times\nabla\bT$. Obviously, the components of $\bT_\times$ cannot have b.w. greater then $0$ (cf., e.g., Proposition~A.11 of \cite{HerOrtWyl13}). But since $\bT$ is VSI$_1$, then $\bT_\times$ must be of type III, which thus implies that the components of {$\bT_\times$} can only have b.w. $-1$ or smaller in a frame adapted to $\bl$. However, it is not difficult to see that this cannot be true if $\nabla\bT$ possesses some components of b.w. $+1$ (as we assumed), thus leading to a contradiction. In other words, if $\bT$ is VSI$_1$ then $\nabla\bT$ can only have components of b.w. $0$ or less; but then, in fact, these can be only of b.w. $-1$ or less (since $\nabla\bT$ must be of type III, similarly as in point $i)$), so that, again, $\nabla\bT$ is of {\em aligned} type III.

Combining $i)$  and $ii)$ we have proven the lemma for the case $I=1$. Clearly the same argument extends to any higher $I$, i.e., the proof is complete.

\end{proof}

 In turn, this can be used to prove the following result for rank-2 tensors.

\begin{lemma}[$T_{ab}$ {VSI$_2$} implies Kundt]
\label{lemma_VSI_2}
  If a 2-tensor field $T_{ab}$ is VSI$_2$ then : (a) $T_{ab}$, $T_{ab;c}$ and  $T_{ab;cd}$ are of aligned type III (or more special); (b) the corresponding multiple null direction $\bl$ is necessarily Kundt.
\end{lemma}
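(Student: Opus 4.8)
The plan is to prove Lemma~\ref{lemma_VSI_2} by first invoking Lemma~\ref{lemma_VSI} to establish part (a), and then extracting the Kundt property of $\bl$ from the type~III alignment of $T_{ab}$, $T_{ab;c}$ and $T_{ab;cd}$ by examining the relevant boost-weight components in a frame adapted to $\bl$. Part~(a) is immediate: $T_{ab}$ VSI$_2$ means $T_{ab}$ is VSI$_I$ with $I=2$, so Lemma~\ref{lemma_VSI} applies directly and gives that $T_{ab}$, $\nabla T$ and $\nabla^2 T$ are all of aligned type~III, sharing a common multiple null direction $\bl$.

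For part~(b), first I would note that since $T_{ab}$ is of type~III aligned with $\bl$, the direction $\bl$ is a multiple WAND of $T_{ab}$ (all components of b.w.\ $\ge 0$ vanish), so in particular $T_{00}=T_{0i}=T_{01}=T_{ij}=0$ and only $T_{1i}$ (b.w.\ $-1$) and $T_{11}$ (b.w.\ $-2$) may be nonzero. The strategy is then to show that the geodesic, twistfree, shearfree, expansionfree conditions on $\bl$ — i.e.\ \eqref{Kundt}, $L_{i0}=0=L_{ij}$ — all follow from demanding that $\nabla T$ and $\nabla^2 T$ do not acquire positive b.w.\ components. Concretely, I would compute the b.w.\ $+1$ component(s) of $T_{ab;c}$ in a general null frame adapted to $\bl$ (not assuming $\bl$ is geodesic). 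Since $T$ itself has max b.w.\ $-1$, the only way $\nabla T$ can pick up a b.w.\ $+1$ component is through a derivative of a frame vector producing a b.w.\ $+2$ shift acting on a b.w.\ $-1$ piece, which brings in $L_{i0}$ (and related rotation coefficients involving $\msub{i}{0}{j}$). Setting the b.w.\ $+1$ part of $\nabla T$ to zero should force $L_{i0}T_{1j}$-type combinations to vanish; pushing one more derivative (using $\nabla^2 T$ of aligned type~III, so no b.w.\ $+1$ or $+2$ parts) and the fact that $T\neq 0$ (so some $T_{1i}$ or $T_{11}$ is nonzero somewhere) should let me conclude $L_{i0}=0$, i.e.\ $\bl$ is geodesic, and then similarly that the optical matrix $L_{ij}$ vanishes, i.e.\ $\bl$ is non-expanding, non-shearing, non-twisting.

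More carefully, once $\bl$ is known to be geodesic I would rescale to an affine parameter and use a parallelly transported frame so that the identities of Appendix~\ref{app_D_Kundt} become available in simplified form; the remaining task is to kill $L_{ij}$. Here the cleanest route is: in such a frame the positive- and zero-b.w.\ parts of $\nabla T$ are governed by the b.w.\ $+2, +1, 0$ pieces of $\ell_{a;b}$, $n_{a;b}$, $\msub{i}{a}{b}$ (before imposing Kundt), and these are controlled precisely by $L_{i0}=0$ (already shown) and $L_{ij}$. Demanding $\nabla T$ be aligned type~III (b.w.\ $\le -1$) while $T$ has a nonzero b.w.\ $-1$ or $-2$ component forces $L_{ij}=0$. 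If a single derivative is not enough to separate the cases (e.g.\ when only $T_{11}$ is nonzero, so $T$ is b.w.\ $-2$ and the first derivative has max b.w.\ $0$ automatically), I would use the second derivative $\nabla^2 T$: its b.w.\ $0$ component must vanish too, and writing it out produces terms $\sim L_{ij}T_{11}$ (plus $L_{ij}$ times derivatives of $T_{11}$), forcing $L_{ij}=0$ since $T_{11}\neq 0$.

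The main obstacle I anticipate is the bookkeeping in the case where $T_{ab}$ has \emph{only} a b.w.\ $-2$ component (i.e.\ $T_{ab}=T_{11}n_a n_b$-type up to the frame, the ``most degenerate'' type~N-like situation): then $\nabla T$ is automatically of b.w.\ $\le 0$ regardless of whether $\bl$ is Kundt, so the type~III condition on $\nabla T$ alone gives nothing, and one is forced all the way to $\nabla^2 T$ and must carefully isolate which b.w.\ $0$ terms actually involve the optical scalars of $\bl$ versus terms that vanish for other reasons. A secondary subtlety is handling the case where $\nabla T$ happens to vanish or where $T_{1i}$ and $T_{11}$ conspire; but since the lemma is later applied only to the energy-momentum tensor \eqref{Energy momentum} of a type~N $p$-form, for which $T_{ab}=\frac{\b}{8\pi}\F^2\ell_a\ell_b$ is purely b.w.\ $-2$, it is in fact precisely this most-degenerate branch that matters, so it cannot be dodged — one must do the $\nabla^2 T$ computation and read off $L_{ij}=0$ from the coefficient of $\F^2$.
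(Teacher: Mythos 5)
Your proposal is correct and follows essentially the same strategy as the paper: part (a) is read off from Lemma~\ref{lemma_VSI}, and part (b) is obtained by demanding that the non-negative boost-weight frame components of $\nabla\bT$ and $\nabla\nabla\bT$ vanish in a frame adapted to $\bl$, with the purely b.w.\ $-2$ case $T_{ab}=T_{11}\ell_a\ell_b$ correctly identified as the one that cannot be settled before the second derivative. Two points of comparison. First, for the branch with $T_{1i}$ or $T_{i1}$ nonzero you propose to extract $L_{i0}=0$ and $L_{ij}=0$ directly from the positive- and zero-b.w.\ parts of $\nabla\bT$; this can be made to work (e.g.\ the b.w.\ $+1$ components give $T_{1i}L_{j0}+T_{j1}L_{i0}=0$ together with $T_{1i}L_{i0}=0=T_{i1}L_{i0}$, which force either $L_{i0}=0$ or $T_{1i}=0=T_{i1}$), but it requires some case analysis; the paper instead reduces this branch to the $\ell_a\ell_b$ case in one stroke by applying the special-case argument to $\tilde T_{ab}\equiv T_{ac}T_b^{\ c}=(T_{1i}T_{1i})\ell_a\ell_b$, which is again VSI$_2$. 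Second, in the decisive $T_{ab}=T_{11}\ell_a\ell_b$ case the relevant components are $T^{(1)}_{i10}=T_{11}L_{i0}$ (b.w.\ $0$, giving geodesity) and then $T^{(2)}_{01ij}\propto L_{ki}L_{kj}$ (b.w.\ $0$): the latter is \emph{quadratic} in the optical matrix, not of the form $L_{ij}T_{11}$ as you anticipate, so $L_{ij}=0$ is concluded by tracing $L_{ki}L_{kj}=0$ over $i=j$ rather than by dividing out $T_{11}$ from a linear term. Neither point invalidates your plan, but they are the details you would have to get right when carrying out the computation.
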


\begin{proof}

Point (a) follows immediately from Lemma~\ref{lemma_VSI}. It will be used in the following.

Now, thanks to (a), in an adapted null frame we can write
\be
 T_{ab}=T_{1i}\ell_a m^{(i)}_{\,b}+T_{i1}m^{(i)}_{\,a}\ell_b+T_{11}\ell_a\ell_b . 
 \label{T_frame}
\ee
It is convenient to first prove (b) in the special case $T_{1i}=0=T_{i1}$, i.e., $T_{ab}=T_{11}\ell_a\ell_b$ (of course with $T_{11}\neq0$). Let us define a compact notation for the covariant derivatives 
\be
 T^{(I)}_{abc_1\ldots c_I}\equiv T_{ab;c_1\ldots c_I} \qquad  (I=1,2,3,\dots) .
 \label{T^I}
\ee
For our purposes, it will now suffice to require that certain components of b.w. 0 of $\bT^{(1)}$ and $\bT^{(2)}$ vanish (in view of (a)). First, requiring $T^{(1)}_{i10}=0$  one obtains (using also the first of \eqref{Ricci_rot})
\be
	L_{i0}=0 , 
	\label{geod}
\ee
i.e., $\bl$ must be geodesic (note that the VSI$_1$ property of $\bT$ suffices to prove this). Next, the condition $T^{(2)}_{01ij}=0$ is equivalent to $L_{ki}L_{kj}=0$, which, by tracing, leads to 
\be
 L_{ij}=0 ,
 \label{kundt}
\ee
i.e. (together with \eqref{geod}), $\bl$ must be a Kundt null direction, which thus proves (b) in the case $T_{ab}=T_{11}\ell_a\ell_b$.

Finally, for a generic $T_{ab}$ (eq.~\eqref{T_frame}) we can apply the same argument to the tensor $\tilde T_{ab}\equiv T_{ac}T_b^{\ c}=(T_{1i}T_{1i})\ell_a\ell_b$  if $T_{1i}\neq0$ (or to $\tilde T_{ab}\equiv T_{ca}T^{c}_{\ b}=(T_{i1}T_{i1})\ell_a\ell_b$ if $T_{i1}\neq0$), so the proof of (b) is now complete (where we used the fact that if $T_{ab}$ is VSI$_2$ then $\tilde T_{ab}$ must obviously also be VSI$_2$).

\end{proof}

\subsubsection{Proof}

Let us now prove the direction ``\ref{cond1}. $\Rightarrow$ \ref{cond2}.'' of Theorem~\ref{theor}. We assume that $\bF$ is VSI. It is, in particular, VSI$_0$ and thus, by {Corollary}~\ref{lemma_VSI0}, $\bF$ must be of type N, i.e., condition~\ref{cond2a} is proven. In an adapted frame this means that \eqref{F_N} holds. It remains to prove that conditions~\ref{cond2b} and \ref{cond2c} are also satisfied, i.e., we need to show (recall \eqref{kundt1} and \eqref{DF_DR})
\begin{enumerate}[(i)]
	\item $L_{i0}=0$, $L_{ij}=0$	\label{cond_Kundt}	\label{thes_kundt}
	\item $R_{010i}=0$, $DR_{0101}=0$ (in a parallelly transported frame) \label{thes_deg}
	\item $DF_{1i_1\ldots i_{p-1}}=0$ (in a parallelly transported frame). \label{thes_DF} 
\end{enumerate}

It is convenient to define the following 2-tensor $\bT$\footnote{For a VSI$_0$ field $\bF$, $\bT$ equals the associated energy-momentum tensor \eqref{Energy momentum}, since $F^2=0$.}
\be
  T_{ab} = {\frac{\b}{8\pi}} F_{ac_1\ldots c_{p-1}} {F_b}^{c_1\ldots c_{p-1}} .
	\label{Energy_momentum}
\ee
Since $\bF$ is VSI, then {\em $\bT$ must also be VSI}, thus condition~\eqref{thes_kundt} follows immediately from Lemma~\ref{lemma_VSI_2} applied to $\bT$, so that $\bl$ is Kundt. (Alternatively, condition~\eqref{thes_kundt} can also be proven using Proposition~\ref{proposition_VSI1_2} in Appendix~\ref{app_VSI_1_2} -- note also that so far we used only the assumption that $\bF$ is VSI$_2$.)

For the next steps, it is useful to observe that, by \eqref{F_N}, $\bT$ is also of type N w.r.t. $\bl$, i.e.,
\be
  T_{ab} = {\frac{\b}{8\pi}} \F^2\ell_a \ell_b ,
	\label{T_N}
\ee
where 
\be
	\F^2\equiv F_{1i_1\ldots i_{p-1}}F_{1i_1\ldots i_{p-1}}\ge 0 ,
	\label{F2}
\ee	
and has b.w. $-2$. Since the b.w. of $\bF$ and $\nabla^{(I)}\bF$ is always $-1$ or less (Lemma~\ref{lemma_VSI}), also the b.w. of $\bT^{(I)}$ (defined in \eqref{T^I}) must always be $-2$ (or less; cf. also Proposition~A.11 of \cite{HerOrtWyl13}). This condition will be used below.

From now on, we employ an affine parameter along the Kundt null vector $\bl$ and a frame parallely transported along it, so that~\eqref{Kundt_rot} holds. As observed in appendix~\ref{app_D_Kundt}, this implies $R_{0i0j}=0=R_{0ijk}=0$ (eq.~\eqref{RiemIb}).

Next, requiring $\nabla\bF$ to be of type III {(namely, $(\nabla\bF)_{01i_1\ldots i_{p-1}}=0$)} is now equivalent to condition~\eqref{thes_DF}, which is thus also proven (again, this alternatively follows from Proposition~\ref{proposition_VSI1_2}).

Condition~\eqref{thes_DF} in turns implies $D(\F^2)=0$, so that by a boost we can set $\F^2=1$ in \eqref{T_N} (while preserving the affine parametrization of $\bl$ and the parallel transport of the frame), i.e., from now on $T_{ab} ={\frac{\b}{8\pi}} \ell_a \ell_b$. 

Now, imposing $T^{(2)}_{11i0}=0$ and $T^{(2)}_{i110}=0$ (these components have b.w. $-1$ and thus must vanish, as observed above) gives, respectively,
\be
 DL_{1i}=0 , \qquad DL_{i1}=0 .
 \label{DL1i}
\ee
By \eqref{11be} this in turn implies
\be
  R_{010i}=0 ,
\ee
so that the Riemann type is II or more special (recall that we already obtained $R_{0i0j}=0=R_{0ijk}$ above).

Finally, requiring $T^{(3)}_{11100}=0$ (b.w. $-1$) leads to
\be
 D^2L_{11}=0 , 
\ee
and thus, by~\eqref{11a} (with \eqref{DL1i}), 
\be
 DR_{0101}=0 ,
\label{DR0101}
\ee
which completes the proof of condition~\eqref{thes_deg}, and the proof is now complete.

Note that, in fact, we have used only up to the third derivatives of $\bF$ in the argument above (cf. Remark~\ref{rem_VSI}).

\section{VSI$_1$ and VSI$_2$ $p$-forms}
\setcounter{equation}{0}

\label{app_VSI_1_2}

A bridge between Corollary~\ref{lemma_VSI0} and Theorem~\ref{theor} is provided by the following result (see also Remark~\ref{rem_VSI}).

\begin{proposition}[VSI$_1$ and VSI$_2$ $p$-forms]
 \label{proposition_VSI1_2}
	A $p$-form $\bF$ is VSI$_1$ iff it is is of type N, {$\pounds_{\bl}\bF=0$}, $\bl$ is Kundt. It is VSI$_2$ iff it is is of type N, {$\pounds_{\bl}\bF=0$}, $\bl$ is Kundt and (at least) doubly aligned with the Riemann tensor.
\end{proposition}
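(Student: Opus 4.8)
The plan is to prove Proposition~\ref{proposition_VSI1_2} by breaking it into the two halves ``VSI$_1$'' and ``VSI$_2$'' and, within each, proving both implications by the same techniques already used for Theorem~\ref{theor}. Throughout I would work in a frame parallelly transported along an affinely parametrized candidate null direction $\bl$ whenever the Kundt property is available, so that the commutators \eqref{TD}, \eqref{dD} and the Ricci/Bianchi identities of Appendix~\ref{app_D_Kundt} apply. The key auxiliary tool for the ``only if'' directions is Lemma~\ref{lemma_VSI}: if $\bF$ is VSI$_1$ then $\bF$ and $\nabla\bF$ are of aligned type III, and if $\bF$ is VSI$_2$ then also $\nabla^2\bF$ is aligned type III. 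For the ``if'' directions the work is to check that the stated geometric conditions force the relevant covariant derivatives to be of aligned type III, i.e.\ to have only negative boost-weight components; this is a truncation of the inductive balanced-scalar argument of section~\ref{sec_kundt->vsi}, carried out only to first or second order.

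For the VSI$_1$ statement I would argue as follows. ($\Leftarrow$) Assume $\bF$ is type N, $\bl$ is Kundt, and $\pounds_{\bl}\bF=0$; since $\bl$ is of type N for $\bF$ this means $DF_{1i_1\ldots i_{p-1}}=0$ (first of \eqref{DF_DR}). Because $\bl$ is Kundt, in a parallelly transported frame \eqref{dl}--\eqref{dm} show that covariant differentiation does not raise the boost weight of $\bF$, and the condition $DF_{1i_1\ldots i_{p-1}}=0$ kills the only potential boost-weight-$0$ component of $\nabla\bF$; hence $\nabla\bF$ is of aligned type III, so $\bF$ is VSI$_1$ by Theorem~\ref{alg_theor}/Corollary~\ref{lemma_VSI0} applied to $\bF$ and $\nabla\bF$. ($\Rightarrow$) Assume $\bF$ is VSI$_1$. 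By Corollary~\ref{lemma_VSI0}, $\bF$ is type N, so \eqref{F_N} holds; form $T_{ab}$ as in \eqref{Energy_momentum}, which is VSI$_1$ as well and of the form $T_{ab}\propto \F^2\ell_a\ell_b$. Lemma~\ref{lemma_VSI_2} — which, as its proof notes, only uses the VSI$_1$ (indeed VSI$_0$-plus-one-derivative) content to get $L_{i0}=0$, and VSI$_2$ only for $L_{ij}=0$ — actually I would instead invoke the cleaner route: the condition $(\nabla\bF)_{010i_1\ldots i_{p-1}}=0$ together with \eqref{dl} forces $L_{i0}=0$ and $L_{ij}=0$ directly (since the type III condition on $\nabla\bF$ kills the b.w.\ $+1$ and b.w.\ $0$ pieces produced by $\ell_{a;b}$ acting on the b.w.\ $-1$ tensor $\bF$), so $\bl$ is Kundt; then the remaining b.w.\ $0$ component of $\nabla\bF$ is exactly $DF_{1i_1\ldots i_{p-1}}$, whose vanishing is $\pounds_{\bl}\bF=0$. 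The slightly delicate point I should be careful about is extracting both $L_{i0}=0$ and $L_{ij}=0$ from the single tensorial condition that $\nabla\bF$ has no non-negative b.w.\ components; this needs the assumption $p\ge1$ and a short contraction argument on the spatial indices of $F_{1i_1\ldots i_{p-1}}$, exactly analogous to the tracing step $L_{ki}L_{kj}=0\Rightarrow L_{ij}=0$ in Lemma~\ref{lemma_VSI_2} (and it is here, if anywhere, that one must be a little careful when $p=1$ versus $p\ge2$).

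For the VSI$_2$ statement I would proceed incrementally from the VSI$_1$ result. ($\Leftarrow$) Now additionally assume $\bl$ is (at least) doubly aligned with the Riemann tensor, i.e.\ the Riemann type is II aligned with $\bl$, so (in a p.t.\ frame) $R_{0i0j}=0=R_{0ijk}$ from \eqref{RiemIb} plus $R_{010i}=0$. Combined with $DF_{1i_1\ldots i_{p-1}}=0$ and the Ricci/Bianchi identities \eqref{11be}--\eqref{11m}, I get $DL_{1i}=0=DL_{i1}$ and $D\M{i}{j}{k}=0$; differentiating $\nabla\bF$ once more using \eqref{dl}--\eqref{dm} and these relations, every potentially non-negative b.w.\ component of $\nabla^2\bF$ vanishes, so $\nabla^2\bF$ is of aligned type III and $\bF$ is VSI$_2$. ($\Rightarrow$) Assume $\bF$ is VSI$_2$; then it is VSI$_1$, so by the first half $\bF$ is type N, $\pounds_{\bl}\bF=0$, and $\bl$ is Kundt; moreover $T_{ab}=\frac{\b}{8\pi}\F^2\ell_a\ell_b$ is VSI$_2$, and after a boost setting $\F^2=1$ (legitimate since $D(\F^2)=0$ follows from $\pounds_{\bl}\bF=0$) one has $T_{ab}\propto\ell_a\ell_b$; then imposing that the b.w.\ $-1$ components $T^{(2)}_{11i0}$ and $T^{(2)}_{i110}$ vanish gives $DL_{1i}=0=DL_{i1}$, and by \eqref{11be} this yields $R_{010i}=0$, i.e.\ (with \eqref{RiemIb}) $\bl$ is doubly aligned with the Riemann tensor. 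I expect the main obstacle to be purely bookkeeping: verifying that the second covariant derivative $\nabla^2\bF$ genuinely has no b.w.\ $+2$, $+1$ or $0$ surviving components under the stated hypotheses requires expanding $\nabla_c\nabla_d F_{\ldots}$ via \eqref{dl}--\eqml{dm} and tracking which Ricci-rotation coefficients multiply which b.w.\ pieces of $\bF$ and $\nabla\bF$ — no conceptual difficulty, but it must be organized carefully, and one should double-check that the doubly-aligned Riemann hypothesis (rather than the stronger degenerate-Kundt hypothesis) is exactly what is needed at second order, with the extra condition $DR_{0101}=0$ entering only at third order, consistently with Remark~\ref{rem_VSI}.
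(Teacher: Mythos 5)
Your proposal is correct and follows essentially the same route as the paper's own proof: type N from Corollary~\ref{lemma_VSI0}, the auxiliary tensor $T_{ab}\propto\F^2\ell_a\ell_b$ and the vanishing of the non-negative boost-weight components of $\nabla\bF$ (with the spatial-index contraction/tracing step) to extract $L_{i0}=0$, $L_{ij}=0$ and $DF_{1i_1\ldots i_{p-1}}=0$, then $T^{(2)}_{11i0}=T^{(2)}_{i110}=0$ with \eqref{11be} for the double alignment, and a truncated balanced-scalar check for the converse. Apart from a harmless index-count typo in the component $(\nabla\bF)_{010i_1\ldots i_{p-1}}$ and your preference for reading the geodesic condition off $\nabla\bF$ rather than off $T^{(1)}_{i10}$, the argument matches the paper's.
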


In particular, this means that {\em for solutions of the Einstein-Maxwell theory, we have that VSI$_1\Rightarrow$VSI}, since condition~\ref{cond2c} of Theorem~\ref{theor} is automatically satisfied (cf. Remark~\ref{rem_deg}). {Similarly, VSI$_1\Rightarrow$VSI also for a $\bF$ in the classes of spacetimes mentioned in Remark~\ref{rem_deg} which are necessarily degenerate Kundt.}

\begin{proof} 

Let us first prove the ``only if'' part. We assume that $\bF$ is VSI$_1$. Then $\bT$ must have the same property. In particular, in an adapted frame, we have~\eqref{T_N}. Further, requiring $T^{(1)}_{i10}=0$  (recall definition~\eqref{T^I}) we obtain $L_{i0}=0$ (i.e., $\bl$ is geodesic). From now one we can thus employ an affine parameter and a frame parallelly transported along $\bl$. The condition $(\nabla\bF)_{01i_1\ldots i_{p-1}}=0$ gives $DF_{1i_1\ldots i_{p-1}}=0$. From $(\nabla\bF)_{i_1\ldots i_{p+1}}=0$ and $(\nabla\bF)_{j10i_1\ldots i_{p-2}}=0$ we obtain, respectively,
\beqn
  & & F_{1[i_1\ldots i_{p-1}}L_{i_p]j}=0 , \label{VSI1_a} \\
	& & F_{1ji_1\ldots i_{p-2}}L_{jk}=0 . \label{VSI1_b}
\eeqn
Contracting \eqref{VSI1_a} with $L_{i_p k}$ and using \eqref{VSI1_b} leads to $F_{1i_1\ldots i_{p-1}}L_{jk}L_{jl}=0$. Further contraction with $\delta_{kl}$ gives $L_{jk}L_{jk}=0$ and thus $L_{jk}=0$, i.e., $\bl$ is Kundt. This is all we needed to prove as for the VSI$_1$ statement. 

If, additionally, $\bF$ is VSI$_2$, then from $T^{(2)}_{11i0}=0$ and $T^{(2)}_{i110}=0$ we obtain $DL_{1i}=0=DL_{i1}$. By~\eqref{11be} this implies $R_{010i}=0$, so that (recall also \eqref{RiemIb}) $\bl$ is doubly aligned with the Riemann tensor, as we wanted to prove.

The ``if'' part of the proposition can be proven similarly by reversing the above steps (essentially showing that under the  conditions given in Proposition~\ref{proposition_VSI1_2} the first covariant derivatives (and for VSI$_2$ also the second covariant derivatives) of $\bF$ are b.w. negative.

\end{proof}

%\bibliographystyle{JHEP}
%%
%%%%\bibliographystyle{my_cqg}
%%%
%%%%\bibliographystyle{elsart-num_mio}
%%%\bibliography{bibl}
%%
%%%\bibliographystyle{unsrt}
%%
%%
%\bibliography{bibl}

\providecommand{\href}[2]{#2}\begingroup\raggedright\endgroup

\end{document}